%% file: root.tex
\documentclass[journal,twoside,web]{ieeecolor}

\usepackage{generic}
\usepackage{cite}
\usepackage{amsmath,amssymb,amsfonts}
\usepackage{algorithmic}
\usepackage{graphicx}
\usepackage{textcomp}

\usepackage{graphicx}
\usepackage{subfigure}
\usepackage{epsfig} % for postscript graphics files
\usepackage{cite}
\usepackage{lcsys}

\newtheorem{theorem}{Theorem}[section]
\newtheorem{lemma}[theorem]{Lemma}
\newtheorem{proposition}[theorem]{Proposition}

\newtheorem{assumption}[theorem]{Assumption}
\input{preamble}

\begin{document}

\def\BibTeX{{\rm B\kern-.05em{\sc i\kern-.025em b}\kern-.08em
    T\kern-.1667em\lower.7ex\hbox{E}\kern-.125emX}}
\markboth{\journalname, VOL. XX, NO. XX, XXXX 2017}
{Author \MakeLowercase{\textit{et al.}}: Preparation of Papers for IEEE Control Systems Letters (September 2026)}

\title{A Frequency Domain Approach to Bounding Riccati Equation Perturbations}

\author{Rachel Newton, Laura Balzano, \IEEEmembership{Senior Member, IEEE}, and Peter Seiler, \IEEEmembership{Fellow, IEEE}
\thanks{This work is supported by ARO award W911NF-26-1-A219 (all), and NSF graduate research fellowship DGE 1841052 (RN only).}
\thanks{The authors are with the Electrical and Computer Engineering Department at the University of Michigan, Ann Arbor, MI 48109 (e-mails: rhoadesr@umich.edu, pseiler@umich.edu, girasole@umich.edu). }
}

\maketitle
\thispagestyle{empty}

%%%%%%%%%%%%%%%%%%%%%%%%%%%%%%%%%%%%%%%%%%%%%%%%%%%%%%%%%%%%%%%%%%%%%%%%%%%%%%%%
\begin{abstract}
The discrete algebraic Riccati equation (DARE) is used to solve for the optimal feedback gain in linear-quadratic regulator (LQR) control of discrete-time systems. In this letter, we consider the effect of perturbations to the LQR data on the DARE solutions. We present a new frequency domain approach to derive an explicit bound on the difference in DARE solutions under such perturbations. Our bounds are similar to other results that have recently appeared in the literature with two distinctions. First, we use a frequency domain constant that measures the degree of closed-loop stability, in contrast to other results that use a time domain measure. Second, we start from weaker detectability assumptions on the LQR state cost matrix. 

\end{abstract}

\begin{IEEEkeywords}
Optimal control, Uncertain systems, Riccati equations, Perturbation bounds
\end{IEEEkeywords}

\section{Introduction}
\label{sec:introduction}

The linear-quadratic regulator (LQR) is a standard optimal control problem for which the optimal controller is a static state feedback gain. 
If the system and cost matrices are known, this gain can be computed from a solution to the discrete algebraic Riccati equation (DARE). 
However, in data-driven and robust control, the system data may not be exactly known. 

As a consequence, it is useful to derive bounds on the DARE solutions under perturbations to the state and cost matrices.

There are several DARE perturbation results in the literature. Konstantinov used an operator-theoretic approach to bound these solutions \cite{konstantinov1993perturbation}.
In \cite{mania2019certainty}, Mania et al. expanded on the operator-theoretic approach to construct a simplified bound, and derived a tighter bound under stronger assumptions on the LQR data. 
Finally, \cite{simchowitz2020naive} derived a bound using a self-bounding ODE method. Later results extend these types of bounds to certain classes of time-varying systems, e.g. \cite{sattar2022certainty, jadbabaie2024multi}.

In this letter, we also construct a computable bound on the DARE solutions when the LQR data is perturbed using a novel frequency domain approach. 
Our bound includes a frequency domain measure of closed-loop stability, $\chat_{A1}$ as defined in \eqref{eqn:cA1_def2}. 
In contrast, existing bounds use a time domain measure of stability $\tau$ as reviewed in Section \ref{sec:MainResults}. 
Furthermore, the bounds in the literature typically assume the state cost matrix $Q$ is either positive definite or satisfies an observability assumption, and so the DARE solution is positive definite. 
Our bound starts from the weaker assumptions that $Q$ is positive semidefinite and satisfies a detectability assumption. As a result, the DARE solutions can be positive semidefinite, allowing our bound to be applied in more general situations.

We first introduce the stochastic LQR problem as a framework for the DARE in Section \ref{sec:pbm_formulation}. We then construct a bound on the difference of the \htwonorms for two systems that are $\epsilon$-close in Section \ref{sec:h2normbounds}. 
Finally, in Section \ref{sec:MainResults}, we utilize the equivalence between the \htwonorm of a closed-loop system and the associated LQR cost to prove the DARE solutions must be close.
The result is an intuitive bound that only incorporates frequency domain properties as constants.

\section{Problem Formulation}\label{sec:pbm_formulation}

Consider a discrete-time, LTI state-space system
\begin{equation} \label{eqn:ltisys}
		x_{k+1} =Ax_k+B u_k + w_k, 
\end{equation}
where $x_k\in\R^{n_x}$ and $u_k\in\R^{n_u}$ are the state and input at time $k$. Let the process noise $w_k$ be i.i.d. zero-mean Gaussian with covariance matrix $\Sigma_w$. The initial condition $x_0$ is also assumed to be Gaussian with mean $\bar{x}_0$ and covariance $\Sigma_x$. The process noise $w_k$ and initial condition $x_0$ are independent.

Define the following infinite horizon, quadratic cost:
\begin{align}
\label{eqn:lqrCost}
J(u) := \lim_{N\to \infty} 
\mathbb{E} \left[ \frac{1}{N} \sum_{k=0}^N x_k^\top Q x_k + u_k^\top R u_k \right] .
\end{align}
Here $Q \succeq 0$ and $R \succ 0$ are cost matrices, selected by the designer, that specify the trade-off between state regulation and control effort. 
We can obtain the thin square root factor of $Q$ to give the form $Q=C^\top C$ where $C \in \R^{n_y \times n_x}$  with $n_y \le n_x$.
The infinite horizon LQR problem searches for the control signal $u_k$ that solves the following 
\begin{align}
\label{eqn:lqr}
\begin{split}
J^* := & \min_u \, J(u) \\
&\st \mbox{ Equation~\ref{eqn:ltisys}
}.
\end{split}
\end{align}
The control signal at time $k$ is assumed to depend only on the current and past states, i.e., $u_k$ only depends on $\{x_0,\ldots,x_k\}$. 
Throughout this letter, we will consider static state feedback controllers of the form $u_k = -K x_k$. We use $J(K)$ to denote the cost associated with gain $K$.
The optimal solution to \eqref{eqn:lqr} is given by one such static, state feedback gain. We first provide the assumptions required to formally state this result.

\vspace{0.1in}
\sloppypar\begin{assumption}\label{asmp:lqrReg}
    The discrete-time LTI system matrices $(A,B)$, noise covariance matrix $\Sigma_w$, and state cost matrix $Q=C^\top C$ and control cost matrix $R$ satisfy %(a) $Q=C^\top C \succeq 0$ and $R\succ 0$, (b) $\Sigma_w \succeq 0$, (c) $(A,B)$ stabilizable, and (d) $(A,C)$ detectable.

    \begin{enumerate}[label=(\alph*)]
    \item $Q=C^\top C \succeq 0$ and $R\succ 0$,
    \item $\Sigma_w \succeq 0$,
    \item $(A,B)$ stabilizable, and 
    \item $(C,A)$ detectable. 
    \end{enumerate}
\end{assumption}
\vspace{0.1in}

The optimal LQR solution is stated in the next lemma using these assumptions. 

\vspace{0.1in}
\begin{lemma}[Section 6.4.5 of \cite{kwakernaak1972linear}]
	\label{lem:LQR}    
    Consider discrete-time LTI system matrices $(A,B)$, noise covariance matrix $\Sigma_w$, and cost matrices $(Q,R)$ that satisfy Assumption~\ref{asmp:lqrReg}. Then there is a unique positive semidefinite solution $P^{\star}$  to the following discrete-time algebraic Riccati equation (DARE):
	\begin{equation}
		P-A^\top PA - Q + A^\top PB \, (R+B^\top PB)^{-1}\, B^\top PA  = 0.  \label{eqn:fulldare_lqr}
	\end{equation}
	Moreover, we can define the state feedback gain
	\begin{equation}
		K^{\star}:=\left(R+B^\top P^{\star} B\right)^{-1} B^\top P^\star A \label{eqn:Kf}
	\end{equation} 
	such that the controller $u_k = -K^{\star} x_k$ stabilizes the system \eqref{eqn:ltisys} and minimizes the infinite horizon LQR cost \eqref{eqn:lqr}.  The optimal cost is given by $J^\star = \mbox{trace}(P\Sigma_w$).
\end{lemma}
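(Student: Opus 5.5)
The proof splits into three parts: existence via value iteration, optimality via completing the square, and uniqueness of the positive semidefinite solution via a closed-loop Lyapunov comparison. Detectability of $(C,A)$ is what makes both stability and uniqueness work.

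\emph{Existence.} Consider the finite-horizon Riccati recursion
\begin{equation*}
P_{k+1} = Q + A^\top P_k A - A^\top P_k B (R+B^\top P_k B)^{-1} B^\top P_k A, \qquad P_0 = 0 .
\end{equation*}
By induction (dynamic programming), $x^\top P_k x$ is the optimal $k$-step deterministic cost from initial state $x$. Two consequences follow directly.
\begin{itemize}
\item The sequence $\{P_k\}$ is monotonically nondecreasing in the semidefinite order, since costs over longer horizons are larger.
\item It is uniformly bounded above. By stabilizability of $(A,B)$ there is a gain $K_0$ with $A-BK_0$ Schur, and the infinite-horizon cost of $u=-K_0x$ dominates every $P_k$.
\end{itemize}
Hence $P_k \to P^\star \succeq 0$, and passing to the limit in the recursion shows that $P^\star$ satisfies \eqref{eqn:fulldare_lqr}.

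\emph{Stability of $A - BK^\star$.} With $A_{cl}=A-BK^\star$, rewrite the DARE in closed-loop Lyapunov form:
\begin{equation*}
P^\star = A_{cl}^\top P^\star A_{cl} + C^\top C + K^{\star\top} R K^\star .
\end{equation*}
Suppose $A_{cl} v = \lambda v$ with $|\lambda|\ge 1$. Multiplying the equation by $v^*$ and $v$ gives
\begin{equation*}
(1-|\lambda|^2)\, v^*P^\star v = \|Cv\|^2 + \|R^{1/2}K^\star v\|^2 .
\end{equation*}
The left side is $\le 0$ and the right side is $\ge 0$, so both vanish. Since $R \succ 0$, this forces $K^\star v=0$ and $Cv=0$. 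Then $Av = A_{cl}v = \lambda v$ with $Cv=0$ and $|\lambda|\ge 1$, which contradicts detectability of $(C,A)$ by the PBH test. Hence $A_{cl}$ is Schur.

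\emph{Optimality and cost.} For any causal input, complete the square using the DARE:
\begin{equation*}
x_k^\top Q x_k + u_k^\top R u_k = V(x_k) - \mathbb{E}\bigl[V(x_{k+1})\mid x_k,u_k\bigr] + \operatorname{trace}(P^\star\Sigma_w) + \|u_k+K^\star x_k\|^2_{R+B^\top P^\star B},
\end{equation*}
where $V(x)=x^\top P^\star x$. This uses that $w_k$ is zero-mean and independent of $(x_k,u_k)$. Sum over $k$, divide by $N$, and let $N\to\infty$.
\begin{itemize}
\item For any input with finite average cost, the telescoping boundary terms $V(x_0)/N$ and $\mathbb{E}[V(x_{N+1})]/N$ should vanish or be nonnegative in the limit, giving $J(u)\ge \operatorname{trace}(P^\star\Sigma_w)$.
\item Equality holds for $u=-K^\star x$, because $A_{cl}$ is Schur and so $\mathbb{E}[x_k^\top P^\star x_k]$ stays bounded.
\item Inputs with infinite cost are trivially not better.
\end{itemize}

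\emph{Uniqueness.} This is the main obstacle. Let $\tilde P\succeq 0$ be any other solution of \eqref{eqn:fulldare_lqr}. Repeating the stability argument above with $\tilde P$ in place of $P^\star$ shows that its gain $\tilde K$ is also stabilizing. Then, using the closed-loop form for both solutions, the difference $\Delta = \tilde P - P^\star$ satisfies a Stein equation of the form
\begin{equation*}
\Delta = \tilde A_{cl}^\top \Delta A_{cl} ,
\end{equation*}
with both $A_{cl}$ and $\tilde A_{cl}=A-B\tilde K$ Schur. Iterating this identity gives $\Delta = (\tilde A_{cl}^\top)^k \Delta A_{cl}^k \to 0$, so $\Delta=0$. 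The care needed is in verifying this cross-term identity algebraically, using the optimality of each gain for its own $P$.

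Alternatively, uniqueness follows from the cost interpretation: $x^\top\tilde P x$ equals the infinite-horizon cost of the stabilizing gain $\tilde K$, so $\tilde P\succeq P^\star$. Conversely, comparing in the other direction via completing the square with $\tilde P$ gives $\tilde P\preceq P^\star$.
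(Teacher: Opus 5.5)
The paper does not prove this lemma; it is quoted from \cite{kwakernaak1972linear}, so your argument has to stand on its own. Most of it does. Existence by value iteration, the PBH argument that $A-BK^\star$ is Schur, and the uniqueness argument are sound. The cross-term identity you flagged also holds: using $B^\top PA=(R+B^\top PB)K_P$ for each solution, both $\tilde A_{cl}^\top P^\star A_{cl}$ and $\tilde A_{cl}^\top \tilde P A_{cl}$ reduce to the respective solution minus the common term $Q+\tilde K^\top RK^\star$.

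The genuine gap is the lower bound $J(u)\ge\tr(P^\star\Sigma_w)$ for an arbitrary causal input. After summing your identity, the boundary term enters as $-\frac{1}{N}\mathbb{E}[x_{N+1}^\top P^\star x_{N+1}]\le 0$. It therefore cannot be ``nonnegative in the limit''; you must prove it tends to zero. Finite average cost does not give this by itself, because with $Q$ only semidefinite it controls $\mathbb{E}\|Cx_k\|^2$ and $\mathbb{E}\|u_k\|^2$, not $\mathbb{E}\|x_k\|^2$. In fact, for a PSD solution of the DARE the boundary term can blow up under a finite-cost input. The scalar problem $A=2$, $B=1$, $Q=0$ has the PSD solution $P=3R$, while $u\equiv 0$ has zero cost and $\mathbb{E}[x_N^2]$ grows like $4^N$. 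So this step needs an argument that uses either detectability or the way $P^\star$ was constructed, and your write-up supplies neither.

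The cleanest repair reuses your existence step: at stage $k$, complete the square with the iterate $P_{N-k}$ rather than with $P^\star$. Because $P_{N+1-k}$ is the Riccati map applied to $P_{N-k}$, the sum telescopes down to the terminal weight $P_0=0$. For every causal policy this gives
\[
\mathbb{E}\sum_{k=0}^{N}\left(x_k^\top Qx_k+u_k^\top Ru_k\right)\;\ge\;\mathbb{E}\left[x_0^\top P_{N+1}x_0\right]+\sum_{j=0}^{N}\tr(P_j\Sigma_w)\;\ge\;\sum_{j=0}^{N}\tr(P_j\Sigma_w).
\]
Since $P_j\to P^\star$, the Ces\`aro average of the right side tends to $\tr(P^\star\Sigma_w)$, and no boundary term is left to control. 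Detectability is then needed only to make $K^\star$ stabilizing, which gives equality.

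Alternatively, keep $P^\star$ and use detectability directly. Choose $L$ with $A-LC$ Schur and write $x_{k+1}=(A-LC)x_k+LCx_k+Bu_k+w_k$. From this, derive $\mathbb{E}\|x_k\|^2\le c_0+c_1\sum_{j<k}\rho^{k-1-j}\,\mathbb{E}[x_j^\top Qx_j+u_j^\top Ru_j]$ for some $\rho<1$. Finiteness of the limit defining $J(u)$ forces the expected stage cost to be $o(N)$, so $\mathbb{E}\|x_N\|^2=o(N)$ and the boundary term vanishes.
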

\vspace{0.1in}

We use the shorthand notation $\text{LQR}(A,B,\Sigma_w,C,R)$ to refer to the infinite horizon LQR problem with dynamics $(A,B)$, noise covariance $\Sigma_w$, and cost matrices $(C^\top C, R)$. 

We want to quantify the effect of perturbations to the LQR problem data on the solution of the DARE.  Specifically, we consider two LQR problems  $\LQR(A_i,B_i,\Sigma_w,C_i,R)$ with corresponding DARE solutions $P_i$ for $i=1$, $2$.  We require both problems satisfy Assumption~\ref{asmp:lqrReg}. Let $\epsilon>0$ such that:
\begin{align*}
     \| A_1 - A_2 \|_2 \le \epsilon, \,\,\,
    \| B_1 - B_2 \|_2 \le \epsilon, 
    \mbox{ and } \| C_1 - C_2 \|_2 \le \epsilon.
\end{align*}
It follows from \cite[Theorem 2.1]{konstantinov1993perturbation} that the DARE solutions are continuous functions of the problem data. Hence $P_1$ and $P_2$ are ``close" if $\epsilon$ is sufficiently small. Our goal is to quantify this statement. Specifically, we derive a constant $c_1$ such that the DARE solutions satisfy 
\begin{align}
        \| P_1 - P_2 \|_2 \le c_1 \epsilon.     
\end{align}
for sufficiently small $\epsilon$. We provide an explicit expression for $c_1$ in terms of data only from the first LQR problem.

Before proceeding, we comment on our starting assumptions. First, note that both problems share the same process noise $\Sigma_w$. The process noise only affects the optimal LQR cost, not the DARE solutions $P_i$. It does not enter the DARE and hence it does not affect our perturbation analysis. Second, both problems also share the same control cost matrix $R$. This assumption is for simplicity; our analysis could be extended to include perturbations to this matrix. Third, we assume perturbations to the state cost are characterized by the difference in square root factors $C_1-C_2$.  This type of perturbation bound appears naturally in certain problems. Note that $(C,A)$ is detectable as in Assumption~\ref{asmp:lqrReg}(d) if and only if $(Q,A)$ is detectable \cite[Problem 22.5]{hespanha23}.

Finally, we require both problems satisfy Assumptions~\ref{asmp:lqrReg}(c) and (d) throughout this letter. This can be weakened to only assume Problem 1 satisfies \ref{asmp:lqrReg}(c) and (d) when $\epsilon$ is sufficiently small. If $(A_1,B_1)$ is stabilizable then there exists a gain $K_1$ such that $A_1-B_1K_1$ is stable, and therefore all eigenvalues are strictly inside the unit disk.  
Eigenvalues are continuous functions of the matrix entries and so there exists a range of $\epsilon$ for which all eigenvalues of $A_2-B_2K_1$ remain inside the unit disk. For all $\epsilon$ in this range, then $(A_2,B_2)$ is stabilizable. An explicit bound for this range can be derived, but is omitted here. \footnote{Consider the  two closed-loop matrices $\Ahat_{P1} :=A_1-B_1K_1$ and $\Ahat_{P2} :=A_2-B_2K_1$.  The difference in closed-loop matrices is bounded by:
\begin{align*}
    \|\Ahat_{P1}-\Ahat_{P2}\|_2 
    \le \| A_1-A_2\|_2 + \|B_1-B_2\|_2 \|K_1\|_2
    \le \epsilon (1+\|K_1\|_2).
\end{align*}
Next, define the constant  $\chat_{A1}:=\max_{|z|\ge 1} \| (zI-\Ahat_{P1})^{-1}\|_2$. This constant is finite because $\Ahat_{P1}$ is Schur. Define $\epsilon_0:=0.5/(\chat_{A1} (1+\|K_1\|_2))>0$. If $\epsilon < \epsilon_0$ then $\Ahat_{P2}$ is Schur
(Lemma~\ref{lem:resolvent_identities} in the Appendix). Hence $\epsilon < \epsilon_0$ implies that $(A_2,B_2)$ is stabilizable.} 
Similarly, detectability of $(C_1,A_1)$ implies detectability of $(C_2, A_2)$ if $\epsilon$ is sufficiently small.

\section{Perturbation Bounds on the \htwonorm}\label{sec:h2normbounds}
We next introduce the relationship between the LQR problem and a related systems norm. First, consider the following discrete-time LTI system $\Ghat$ defined as:
\begin{align}
\label{eq:Gwtoz}
\begin{split}
    x_{k+1} & = \Ahat x_k + \Bhat w_k \\
    e_k & = \Chat x_k + \Dhat w_k
\end{split}
\end{align}
The corresponding transfer function $\Ghat:\C \to \C^{n_e \times n_w}$ maps the complex number $z\in \C$ to the frequency response:
\begin{align}
    \Ghat(z) := \Chat(zI-\Ahat)^{-1}\Bhat + \Dhat.
\end{align}
The Hardy space \htwo is the Hilbert space of systems that are square-integrable on the unit circle and stable, i.e. all poles inside the disk (Section 3.3 of \cite{dullerud13}). The inner product of two stable LTI systems $\Ghat_1$ and $\Ghat_2$ is defined as:
\begin{align}\label{eqn:H2innerprod}
    \langle \Ghat_1 , \Ghat_2 \rangle := 
\frac{1}{2\pi} \int_0^{2\pi} \tr( \Ghat_1^\star(e^{j\theta}) \Ghat_2(e^{j\theta} )) \, \mathrm{d} \theta.
\end{align}
The \htwonorm generated by this inner product is:
\begin{align}
\|\Ghat\|_2:=\sqrt{ \langle \Ghat,\Ghat \rangle}    
= &\left[  \frac{1}{2\pi} \int_0^{2\pi} \| \Ghat(e^{j\theta}) \|_F^2 \,\mathrm{d} \theta\right]^{1/2}.
\end{align}

We can use the \htwonorm to bound the cost of suboptimal LQR gains. Consider the output of a stable system \eqref{eq:Gwtoz} driven by an input $w_k$ that is i.i.d. zero-mean Gaussian noise with covariance $\Sigma_w$. The steady-state mean-square output is:
\begin{align} \label{eqn:ssCost}
\Jhat := & \lim_{N\to \infty} 
\mathbb{E} \left[ \frac{1}{N} \sum_{k=0}^N e_k^\top e_k \right].
\end{align}

The next lemma connects the \htwonorm and this steady-state mean-square output. This result is known, e.g. see \cite[Section 6.1]{dullerud13} for the continuous-time case.
\begin{lemma}
\label{lem:H2normCost}
Assume the system $\Ghat$ in \eqref{eq:Gwtoz} is stable, i.e. all poles are in the unit disk. Then $\Jhat = \|\Ghat\|_2^2$.
\end{lemma}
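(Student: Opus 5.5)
The plan is to compute the mean-square output directly from the state covariance, and then match the result to the frequency-domain integral via Parseval's theorem.

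\emph{Covariance recursion.} Let $\Sigma_k := \mathbb{E}[x_k x_k^\top]$, which includes the contribution of the mean $\bar{x}_0$. Because $w_k$ is zero-mean, white, and independent of $x_k$ (since $x_k$ depends only on $x_0, w_0, \ldots, w_{k-1}$), we obtain
\begin{align*}
\Sigma_{k+1} = \Ahat \Sigma_k \Ahat^\top + \Bhat \Sigma_w \Bhat^\top .
\end{align*}
Similarly, $x_k$ and $w_k$ are uncorrelated, so the cross terms in $e_k^\top e_k$ vanish and
\begin{align*}
\mathbb{E}[e_k^\top e_k] = \tr(\Chat \Sigma_k \Chat^\top) + \tr(\Dhat \Sigma_w \Dhat^\top).
\end{align*}

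\emph{Convergence.} Since $\Ahat$ is Schur, $\Sigma_k \to \bar\Sigma := \sum_{t\ge 0} \Ahat^t \Bhat\Sigma_w\Bhat^\top (\Ahat^\top)^t$. This series is the unique solution of the Lyapunov equation $\bar\Sigma = \Ahat\bar\Sigma\Ahat^\top + \Bhat\Sigma_w\Bhat^\top$. Indeed, $\Sigma_k - \bar\Sigma = \Ahat^k(\Sigma_0-\bar\Sigma)(\Ahat^\top)^k$, which decays geometrically. The summands $\mathbb{E}[e_k^\top e_k]$ therefore converge. Because a convergent sequence has Ces\`aro averages converging to the same limit, we get
\begin{align*}
\Jhat = \tr(\Chat\bar\Sigma\Chat^\top) + \tr(\Dhat\Sigma_w\Dhat^\top).
\end{align*}
In particular, the initial condition and the extra $(N+1)/N$ factor do not affect the limit.

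\emph{Frequency-domain side.} Expand $\Ghat(z) = \sum_{t\ge 0} g_t z^{-t}$, where $g_0 = \Dhat$ and $g_t = \Chat\Ahat^{t-1}\Bhat$ for $t\ge1$. Stability makes this series absolutely convergent on $|z|\ge 1$. Parseval's identity on the unit circle, applied entrywise using orthonormality of $e^{-jt\theta}$, gives $\frac{1}{2\pi}\int_0^{2\pi}\|\Ghat(e^{j\theta})\|_F^2 \,\mathrm{d}\theta = \sum_t \|g_t\|_F^2$.

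\emph{Matching the two expressions.} Here there is a subtlety in how the noise covariance enters. As written, $\|\Ghat\|_2^2$ involves no $\Sigma_w$, whereas the covariance computation yields $\sum_t \tr(g_t \Sigma_w g_t^\top)$. The lemma is therefore to be read with unit-covariance input, $\Sigma_w = I$, or equivalently with $\Sigma_w^{1/2}$ absorbed into $\Bhat$ and $\Dhat$. With that convention,
\begin{align*}
\sum_t \|g_t\|_F^2 = \tr(\Dhat\Dhat^\top) + \sum_{t\ge0}\tr(\Chat\Ahat^t\Bhat\Bhat^\top(\Ahat^\top)^t\Chat^\top),
\end{align*}
which equals $\tr(\Dhat\Dhat^\top)+\tr(\Chat\bar\Sigma\Chat^\top) = \Jhat$.

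The main obstacle is this matching step: it requires stating the normalization of $\Sigma_w$ explicitly and justifying the exchange of the infinite sum with the trace and the integral. That exchange follows from absolute convergence, since $\|\Ahat^t\|\le c\rho^t$ with $\rho<1$.
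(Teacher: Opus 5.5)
Your proof is correct. The paper gives no argument for this lemma: it only says the result is known and cites the continuous-time version in \cite{dullerud13}. Your argument is the standard discrete-time derivation behind that citation, and each step holds as you wrote it. These are the second-moment recursion, the geometric convergence $\Sigma_k-\bar\Sigma=\Ahat^k(\Sigma_0-\bar\Sigma)(\Ahat^\top)^k\to 0$, the Ces\`aro step (which also disposes of $x_0$ and the $(N+1)/N$ factor), and Parseval for the absolutely summable impulse response $g_t$. The citation buys brevity; your explicit route makes visible the hypotheses the statement actually needs.

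Your normalization caveat is a genuine correction to the statement, not a technicality. For general $\Sigma_w$ the correct identity is $\Jhat=\frac{1}{2\pi}\int_0^{2\pi}\tr\bigl(\Ghat(e^{j\theta})\Sigma_w\Ghat^\star(e^{j\theta})\bigr)\,\mathrm{d}\theta$, and the lemma as written is false. For example, $\Ahat=0$, $\Bhat=\Chat=1$, $\Dhat=0$, $\Sigma_w=4$ gives $\Jhat=4$ but $\|\Ghat\|_2^2=1$. The same slip carries into Lemma~\ref{cor:LQR_H2_equiv}: there $\Bhat:=I$ should be $\Bhat:=\Sigma_w^{1/2}$ with a unit-covariance input. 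The only use of that lemma is in Lemma~\ref{lem:LambdaMax}, which takes $\Bhat=v_p$ with a scalar input. That is exactly your absorbed convention for $\Sigmahat_w=v_pv_p^\top$, so the main bounds survive.

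You also correctly read ``stable'' as $\Ahat$ being Schur, rather than as a condition on the poles of $\Ghat$ alone. That stronger reading is needed. Consider a non-minimal realization with a mode $|\lambda|>1$ that is uncontrollable from $w$ but observable and excited by $x_0$. Then $\|\Ghat\|_2$ is finite while $\Jhat=\infty$.
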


This equivalence of the \htwonorm  and the steady-state mean-squared output of a system can be used to express the stochastic LQR cost of a given state feedback controller as the \htwonorm of the closed-loop system. This is formalized in the next lemma. In the lemma,  $R^{1/2}$ denotes the unique symmetric, positive definite square root of $R\succ 0$.

\vspace{0.1in}
\begin{lemma}\label{cor:LQR_H2_equiv}
Consider $\LQR(A,B, \Sigma_w, C, R)$ 
that satisfies Assumption~\ref{asmp:lqrReg}. Let $K$ be any stabilizing (potentially suboptimal) static state feedback controller with stochastic LQR cost $J(K)$. Define the closed-loop system $\Ghat$ by
    \begin{equation} \label{eqn:ltisys_cl}
        \begin{split}
            x_{k+1}&=\Ahat x_k+ \Bhat w_k,\\
            e_k &= \Chat x_k.
        \end{split}
    \end{equation}
    where $\Ahat:=A-BK$, $\Bhat:=I$,
    $\Chat := \m{ C^\top, \,\, (-R^{1/2} K)^\top}^\top$, and $\Dhat:=0$.   
 Then $J(K) = \| \Ghat\|_2^2$.
\end{lemma}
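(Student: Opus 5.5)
The plan is to reduce the claim to Lemma~\ref{lem:H2normCost} by showing that the stochastic LQR cost of the gain $K$ is exactly the steady-state mean-square output of the closed-loop system \eqref{eqn:ltisys_cl}.

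\textbf{Step 1: closed-loop dynamics.} Substituting $u_k=-Kx_k$ into \eqref{eqn:ltisys} gives $x_{k+1}=(A-BK)x_k+w_k=\Ahat x_k+\Bhat w_k$. Since $K$ is stabilizing, $\Ahat$ is Schur, so $\Ghat$ is stable. Thus the stability hypothesis of Lemma~\ref{lem:H2normCost} holds.

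\textbf{Step 2: per-step stage cost.} With $Q=C^\top C$ and $R=R^{1/2}R^{1/2}$ (both factors symmetric),
\begin{align*}
x_k^\top Q x_k + u_k^\top R u_k = \|Cx_k\|_2^2 + \|R^{1/2}Kx_k\|_2^2 = \|\Chat x_k\|_2^2 = e_k^\top e_k .
\end{align*}
Here $\Chat$ stacks $C$ on top of $-R^{1/2}K$. The sign on the second block is irrelevant because only the squared norm enters. Summing over $k$, taking expectations and the limit, the expression in \eqref{eqn:lqrCost} coincides term-by-term with \eqref{eqn:ssCost}. Hence $J(K)=\Jhat$.

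\textbf{Step 3: apply Lemma~\ref{lem:H2normCost}.} This gives $\Jhat=\|\Ghat\|_2^2$ for $\Ghat(z)=\Chat(zI-\Ahat)^{-1}$, which concludes the proof.

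\textbf{Main obstacle.} The one subtle point is the noise model. The system in Lemma~\ref{lem:H2normCost} is driven by noise of covariance $\Sigma_w$, and the initial condition $x_0$ is random with a nonzero mean. I expect the obstacle to be making sure Lemma~\ref{lem:H2normCost} is read with the normalization consistent with $\Sigma_w$; taken literally as stated, with $\Bhat=I$, it implicitly absorbs $\Sigma_w$. I would address this by interpreting $\Bhat$ so that the driving noise is unit covariance (take $\Bhat=\Sigma_w^{1/2}$, or note that the paper's lemma is stated for the given noise), and by invoking it as stated.

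The initial condition does not affect the limit. Because $\Ahat$ is Schur, the contribution of $x_0$ to $x_k$ is $\Ahat^k x_0$, which decays geometrically. Its effect on the Cesàro average $\frac1N\sum_k$ therefore vanishes as $N\to\infty$.
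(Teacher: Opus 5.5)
Your argument is the same as the paper's: write the stage cost as $x_k^\top(Q+K^\top RK)x_k=\|\Chat x_k\|_2^2=e_k^\top e_k$, identify $J(K)$ with the steady-state output power $\Jhat$ of \eqref{eqn:ltisys_cl}, and invoke Lemma~\ref{lem:H2normCost}. Your $\Sigma_w$ concern is well founded, and the paper passes over it; commit to your first fix rather than ``invoking it as stated'', because with $\Bhat=I$ the identity generally fails when $\Sigma_w\neq I$ (e.g.\ $\Sigma_w=2I$ doubles $J(K)$ but leaves $\|\Ghat\|_2^2$ unchanged), so one needs $\Bhat\Bhat^\top=\Sigma_w$, which is exactly how the paper later applies the lemma with $\Bhat=v_p$ for $\Sigmahat_w=v_pv_p^\top$.
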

\begin{proof}
    The per-step cost with a state-feedback controller $u_k=-Kx_k$ is 
    \begin{equation}
    x_k^\top Q x_k + u_k^\top R u_k = x_k^\top \left( Q+ K^\top R K \right) x_k
    \end{equation}
    This per-step cost can be expressed as $e_k^\top e_k$ where $e_k := \Chat x_k$.
    Given this definition, the stochastic LQR cost in \eqref{eqn:lqr} is the same as the steady-state cost in \eqref{eqn:ssCost} with the system $\Ghat$ from $w_k$ to $e_k$ as given in \eqref{eqn:ltisys_cl}.
    By Lemma \ref{lem:H2normCost}, this implies the stochastic LQR cost with gain $K$ is equal to the square of the \htwonorm of this closed-loop system $\Ghat$, i.e. $J(K) = \|\Ghat\|_2^2$.
\end{proof}
\vspace{0.1in}

Next, we show that the \htwonorm of two systems are ``close" if their state matrices are ``close".

\vspace{0.1in}
\begin{lemma}
\label{lem:CloseH2}
    Let $\Ghat_1$ and $\Ghat_2$ be two stable, $n_e\times n_w$ LTI systems defined by $(\Ahat_i, \Bhat, \Chat_i, \Dhat)$ for $i=1,2$. Assume there exists $\epsilonhat>0$ such that their state matrices satisfy
    \begin{align*}
     \|\Ahat_1-\Ahat_2\|_2 \le \epsilonhat, \,\,\, 
     \|\Chat_1-\Chat_2\|_2 \le \epsilonhat,  
    \end{align*}
    If $\epsilonhat < \min\{ 0.5/\chat_{A1}, 1\}$ where the constant $\chat_{A1}$ is
    \begin{align}\label{eqn:cA1_def}
        \chat_{A1} := \max_{|z| \ge 1} \| (zI-\Ahat_1)^{-1}\|_2,  
    \end{align}
    then the \htwonorms of the two systems satisfy
    \begin{align}
        \left|\|\Ghat_1\|_2^2 - \|\Ghat_2\|_2^2 \right| 
        \le    \left[ 2\|\Ghat_1\|_2 + \epsilonhat \fhat_1  \right] \, \epsilonhat \fhat_1 .
    \end{align}    
    where $r:=\min(n_e,n_w)$ and 
   \begin{align}\label{eqn:fhat1def}
        \fhat_1:=r \chat_{A1} \|\Bhat\|_2  + 2 r \chat_{A1}^2\left( \| \Chat_1\|_2 + 1 \right)  \|\Bhat\|_2.
    \end{align}
\end{lemma}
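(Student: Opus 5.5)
The plan is to write $\Ghat_2 = \Ghat_1 + \Delta$ with $\Delta := \Ghat_2-\Ghat_1$, and use the Hilbert-space structure of \htwo. By the triangle inequality, $\big|\|\Ghat_2\|_2-\|\Ghat_1\|_2\big|\le \|\Delta\|_2$. Hence
\begin{align*}
\left|\|\Ghat_1\|_2^2-\|\Ghat_2\|_2^2\right|
= \big|\|\Ghat_1\|_2-\|\Ghat_2\|_2\big|\,\big(\|\Ghat_1\|_2+\|\Ghat_2\|_2\big)
\le \|\Delta\|_2\big(2\|\Ghat_1\|_2+\|\Delta\|_2\big).
\end{align*}
The right-hand side is increasing in $\|\Delta\|_2$. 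The whole lemma therefore reduces to showing $\|\Delta\|_2\le \epsilonhat \fhat_1$.

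To bound $\|\Delta\|_2$, I will bound $\|\Delta(z)\|_2$ pointwise on the unit circle and then convert to the Frobenius norm. Since $\Delta(z)$ is $n_e\times n_w$, its rank is at most $r$, so $\|\Delta(z)\|_F\le \sqrt{r}\,\|\Delta(z)\|_2\le r\|\Delta(z)\|_2$. Integrating gives $\|\Delta\|_2\le r\max_{|z|=1}\|\Delta(z)\|_2$. The $\Dhat$ terms cancel. Writing $\Phi_i(z):=(zI-\Ahat_i)^{-1}$, I split
\begin{align*}
\Delta(z) = (\Chat_2-\Chat_1)\Phi_2(z)\Bhat + \Chat_1\big(\Phi_2(z)-\Phi_1(z)\big)\Bhat .
\end{align*}

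The main technical step is controlling $\Phi_2$ uniformly on $|z|\ge 1$ using only data from system 1. I will use the resolvent identity $\Phi_2-\Phi_1=\Phi_1(\Ahat_2-\Ahat_1)\Phi_2$ (this is presumably Lemma~\ref{lem:resolvent_identities} in the Appendix). It gives $\|\Phi_2\|_2\le \chat_{A1}+\chat_{A1}\epsilonhat\|\Phi_2\|_2$. With $\epsilonhat\chat_{A1}<0.5$ this yields $\|\Phi_2(z)\|_2\le 2\chat_{A1}$ for $|z|\ge1$. A Neumann-series argument on $I-(\Ahat_2-\Ahat_1)\Phi_1$ shows $zI-\Ahat_2$ is invertible there, which is consistent with the stability of $\Ghat_2$. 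Then $\|\Phi_2-\Phi_1\|_2\le 2\chat_{A1}^2\epsilonhat$.

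Combining the two terms of the split gives
\begin{align*}
\|\Delta(z)\|_2\le \epsilonhat\left(2\chat_{A1}+2\chat_{A1}^2\|\Chat_1\|_2\right)\|\Bhat\|_2 .
\end{align*}
It remains to show that this is at most $\epsilonhat\fhat_1/r$. The $\|\Chat_1\|_2$ terms match directly. For the remaining term I need $2\chat_{A1}\le \chat_{A1}+2\chat_{A1}^2$, i.e.\ $\chat_{A1}\ge 1/2$. This holds because $\chat_{A1}\ge \lim_{|z|\to\infty}|z|^{-1}\|(I-\Ahat_1/z)^{-1}\|_2$ is not directly useful; instead, at $z$ with $|z|=1$ we have $\|(zI-\Ahat_1)^{-1}\|_2\ge 1/\|zI-\Ahat_1\|_2$, and the maximum over $|z|\ge1$ is at least $1/(1+\rho)$ with $\rho$ the spectral radius, taking $z$ aligned with an eigenvalue.

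If that route is delicate, the fallback is a different split, $\Chat_2\Phi_2-\Chat_1\Phi_1=(\Chat_2-\Chat_1)\Phi_1+\Chat_2(\Phi_2-\Phi_1)$, together with $\|\Chat_2\|_2\le\|\Chat_1\|_2+\epsilonhat\le\|\Chat_1\|_2+1$. This gives exactly $\chat_{A1}+2\chat_{A1}^2(\|\Chat_1\|_2+1)$, which explains the ``$+1$'' and the condition $\epsilonhat<1$ in the statement. I expect this second split to be the one intended, and I will use it.
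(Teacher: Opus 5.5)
Your proposal is correct and, with the second split you settle on, it follows the paper's proof essentially step for step. The shared steps are the bound $\|\Delta\|_2\,(2\|\Ghat_1\|_2+\|\Delta\|_2)$, the factor $r$ from passing to the spectral norm on the unit circle, the decomposition $(\Chat_2-\Chat_1)\Phi_1\Bhat+\Chat_2(\Phi_2-\Phi_1)\Bhat$, the resolvent estimates of Lemma~\ref{lem:resolvent_identities}, and $\|\Chat_2\|_2\le\|\Chat_1\|_2+1$. You get the first bound from the reverse triangle inequality, while the paper uses Cauchy--Schwarz on $\langle \Ghat_1+\Ghat_2,\Ghat_1-\Ghat_2\rangle$; these amount to the same thing.

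Your first split would also have worked. For an eigenpair $(\lambda,v)$ of $\Ahat_1$ one has $\|(zI-\Ahat_1)^{-1}\|_2\ge 1/|z-\lambda|$, and choosing $z=\lambda/|\lambda|$ (any unit $z$ if $\lambda=0$) gives $\chat_{A1}\ge 1/(1-|\lambda|)\ge 1$. It is this eigenvector argument, not the bound $1/\|zI-\Ahat_1\|_2$ you mention, that does the job. It also shows that the hypothesis $\epsilonhat<1$ is already implied by $\epsilonhat<0.5/\chat_{A1}$.
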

\begin{proof}
Using the \htwo inner product \eqref{eqn:H2innerprod}, we have:
\begin{align}
    \|\Ghat_1\|_2^2 - \|\Ghat_2\|_2^2
    = \mbox{Real}\left[
    \langle \Ghat_1+\Ghat_2,\Ghat_1-\Ghat_2\rangle \right].
\end{align}    
We apply Cauchy-Schwartz to the difference in \htwo norms:
\begin{align}
\label{eq:H2CauchyS}
   \left| \|\Ghat_1\|_2^2 - \|\Ghat_2\|_2^2 \right|
   \le \| \Ghat_1+\Ghat_2 \|_2 \, \|\Ghat_1 - \Ghat_2\|_2.
\end{align}  
Define $\Delta_{\Ghat}:= \Ghat_2-\Ghat_1$, and note that $\Ghat_1+\Ghat_2$ is equal to $2\Ghat_1 + \Delta_{\Ghat}$. We can use this expression and the triangle inequality to bound the first term on the right side of \eqref{eq:H2CauchyS} as
\begin{align}
\label{eq:H2Bound2}
   \left| \|\Ghat_1\|_2^2 - \|\Ghat_2\|_2^2 \right|
   \le 
   \left[ 2\|\Ghat_1\|_2 + \|\Delta_{\Ghat}\|_2\right] \, \|\Delta_{\Ghat}\|_2.
\end{align}

Next, we upper bound $\|\Delta_{\Ghat}\|_2$ in terms of the state matrices. We start by bounding the \htwonorm as
\begin{align}
\|\Delta_{\Ghat}\|_2
\le \max_{|z|\ge 1} \|\Delta_{\Ghat}(z)\|_F.
\end{align}
We can then bound the Frobenius norm  of the frequency response matrix by $\|\Delta_{\Ghat}(z)\|_F \le r \cdot\|\Delta_{\Ghat}(z)||_2$ to get
\begin{align}
\|\Delta_{\Ghat}\|_2
\le r \cdot \max_{|z|\ge 1} \|\Delta_{\Ghat}(z)\|_2.
\end{align}
Here, the \htwonorm of a system is on the left while the induced $2$-norm of a matrix is on the right. 
Both systems are stable and so the maximum over $|z| \ge 1$ is finite. The proof is completed by showing $\|\Delta_{\Ghat}(z)\|_2$ is upper bounded by $\epsilonhat \fhat_1 $ for all $|z| \ge 1$.

To simplify notation, define the resolvents $\Rhat_i(z):=(zI-\Ahat_i)^{-1}$ for $i=1,2$. The difference in transfer functions evaluated at any $z\in \C$ is
\begin{align*}
\Delta_{\Ghat}(z) = 
(\Chat_1 -\Chat_2)\Rhat_1(z)\Bhat 
+\Chat_2(\Rhat_1(z)-\Rhat_2(z))\Bhat.
\end{align*}
The triangle inequality and submultiplicativity then yields 
\begin{align}
\label{eq:Gzdiff}
\| \Delta_{\Ghat}(z) \|_2 
\le   \left( \epsilon \chat_A  + \| \Chat_2\|_2 \|\Rhat_1(z)-\Rhat_2(z)\|_2\right) \|\Bhat\|_2
\end{align}
at all $|z| \ge 1$. Recall that we assume the state matrices satisfy $\|\Ahat_1-\Ahat_2\|_2 \le \epsilonhat$ with  $\epsilonhat < 0.5/\chat_{A1}$. By Lemma~\ref{lem:resolvent_identities}, we can bound the resolvent difference as
\begin{align}
\label{eq:Rdiff}
\| \Rhat_1(z)-\Rhat_2(z)\|_2
\le 2\chat_{A1}^2 \epsilonhat.
\end{align}
We can also bound the output matrix of system 2 as:
\begin{align}
\label{eq:Cdiff}
    \| \Chat_2 \|_2 \le \| \Chat_1 \|_2 +\epsilonhat.
\end{align}

Substitute \eqref{eq:Rdiff} and \eqref{eq:Cdiff} into \eqref{eq:Gzdiff} to obtain the following bound on the matrix 2-norm at all $|z| \ge 1$:
\begin{align*}
\| \Delta_{\Ghat} \|_2 
& \le   \epsilonhat \chat_{A1} \|\Bhat\|_2 
  + 2 \epsilonhat  \chat_{A1}^2\left( \| \Chat_1\|_2 + \epsilonhat \right)  \|\Bhat\|_2 .
\end{align*}
By $\epsilonhat<1$, the right side simplifies to
$\| \Delta_{\Ghat}(z) \|_2 \le \epsilonhat \fhat_1$. Apply this in \eqref{eq:H2Bound2} to obtain the 
bound in the lemma statement.
\end{proof}

\section{Main results}
\label{sec:MainResults}

In this section, we apply the connection between the LQR problem and the \htwonorm of a system in order to bound the difference in DARE solutions. We first need a continuity result. 
\vspace{0.1in}
\begin{lemma}\label{lem:chatA2bound}
    Consider $\LQR(A_i,B_i,\Sigma_w,C_i,R)$ for $i=1$, $2$.
    Assume Problems 1  and 2 satisfy Assumption~\ref{asmp:lqrReg}. Let $(P_i,K_i)$ be the corresponding solutions and gains for $i=1,2$. 
    Let $\epsilon>0$ such that:
    \begin{align*}
         \| A_1 - A_2 \|_2 \le \epsilon, \,\,\,
        \| B_1 - B_2 \|_2 \le \epsilon, 
        \mbox{ and } \| C_1 - C_2 \|_2 \le \epsilon.
    \end{align*}
    Define the constants
    \begin{align}\label{eqn:cAi_def}
        \chat_{Ai} := \max_{|z| \ge 1} \| (zI-(A_i-B_i K_i)^{-1}\|_2 \text { for } i=1,2.
    \end{align}
    Then there exists an $\epsilon_2$ such that $\chat_{A2} \leq 2 \chat_{A1}$ for all $\epsilon <\epsilon_2$. 
\end{lemma}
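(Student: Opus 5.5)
The plan is to reduce the statement to the resolvent perturbation bound of Lemma~\ref{lem:resolvent_identities} in the Appendix. The only obstacle is the difference $K_1-K_2$, which enters the closed-loop matrices.

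Write $\Ahat_{i}:=A_i-B_iK_i$ for $i=1,2$. By assumption $\Ahat_1$ is Schur, so $\chat_{A1}<\infty$. Lemma~\ref{lem:resolvent_identities}, as used in the footnote and in the proof of Lemma~\ref{lem:CloseH2}, states the following. If $\|\Ahat_1-\Ahat_2\|_2\le\delta<0.5/\chat_{A1}$, then $\Ahat_2$ is Schur, and for $|z|\ge1$ we have $(zI-\Ahat_2)^{-1}=(zI-\Ahat_1)^{-1}\bigl(I-(\Ahat_2-\Ahat_1)(zI-\Ahat_1)^{-1}\bigr)^{-1}$. A Neumann series bound then gives $\|(zI-\Ahat_2)^{-1}\|_2\le \chat_{A1}/(1-\chat_{A1}\delta)\le 2\chat_{A1}$. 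Maximizing over $|z|\ge1$ yields $\chat_{A2}\le 2\chat_{A1}$. It therefore suffices to find $\epsilon_2$ such that $\epsilon<\epsilon_2$ implies $\|\Ahat_1-\Ahat_2\|_2<0.5/\chat_{A1}$.

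Split the closed-loop difference as
\begin{align*}
\Ahat_1-\Ahat_2 = (A_1-A_2) - (B_1-B_2)K_1 - B_2(K_1-K_2),
\end{align*}
so that
\begin{align*}
\|\Ahat_1-\Ahat_2\|_2\le \epsilon(1+\|K_1\|_2)+(\|B_1\|_2+\epsilon)\|K_1-K_2\|_2.
\end{align*}
The first term is handled exactly as in the footnote. The main obstacle is the gain difference $\|K_1-K_2\|_2$, since $K_2$ depends on $P_2$, which is exactly what we are trying to bound.

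I would not try to produce an explicit constant for this term. Instead I would invoke continuity. By \cite[Theorem 2.1]{konstantinov1993perturbation}, the stabilizing DARE solution $P$ depends continuously on $(A,B,C)$ near the data of Problem~1, and both problems satisfy Assumption~\ref{asmp:lqrReg}, so $P_2$ is the unique PSD stabilizing solution. Hence $P_2\to P_1$ as $\epsilon\to0$. The gain map $P\mapsto (R+B^\top PB)^{-1}B^\top PA$ in \eqref{eqn:Kf} is continuous in $(A,B,P)$ because $R\succ0$ keeps the inverse well defined. Therefore $K_2\to K_1$ as $\epsilon\to0$.

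Consequently the right side of the bound above tends to $0$ as $\epsilon\to0$. This gives an $\epsilon_2>0$ for which it is less than $0.5/\chat_{A1}$ whenever $\epsilon<\epsilon_2$, and the Neumann argument then completes the proof. The statement only asserts existence of $\epsilon_2$, so this nonconstructive continuity step is adequate.
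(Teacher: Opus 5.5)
Your proof is correct and follows the same route as the paper: it uses the same splitting of $\Ahat_1-\Ahat_2$, continuity of the DARE solution from \cite[Theorem 2.1]{konstantinov1993perturbation}, and Lemma~\ref{lem:resolvent_identities} to conclude $\chat_{A2}\le 2\chat_{A1}$. The only difference is that you get $K_2\to K_1$ directly from continuity of the gain map, whereas the paper goes through the explicit bound $\|K_1-K_2\|_2\le c_1\epsilon+c_2\|P_1-P_2\|_2$ of Lemma~\ref{lem:Kbnd}; since the statement only asserts existence of $\epsilon_2$, either works.
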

\begin{proof}
    Define $\Ahat_i = A_i - B_i K_i$ for $i=1,2$. The matrix difference can be expressed as
    \[\Ahat_1 -\Ahat_2 = (A_1-A_2) - B_1 K_1 +B_2K_1 - B_2K_1 + B_2K_2. \]
    We can apply the triangle inequality and submultiplicativity to bound the norm of the difference as 
    \begin{align*}
        \|\Ahat_1 -\Ahat_2\|_2 &= \|A_1-A_2\|_2  + \|B_2 - B_1\|_2 \|K_1\|_2 \\
        & \quad + \|B_2\|_2 \|K_2 -K_1\|_2  \\
        &\leq \epsilon + \epsilon \|K_1\|_2 + (\|B_1\|_2 + \epsilon ) \|K_1 -K_2\|_2.
    \end{align*}
    By Lemma \ref{lem:Kbnd}, we can define constants $c_1$ and $c_2$ as in \eqref{eqn:Kdiff_constants} which only depend on Problem 1 data such that
    \begin{equation}
        \|K_1-K_2\|_2 \leq  c_1 \epsilon + c_2 \|P_1-P_2\|_2.
    \end{equation}
    Substituting this into the previous bound, we have  
    \begin{align}
        \|\Ahat_1 -\Ahat_2\|_2 &\leq \left( 1 + \|K_1\|_2 + \|B_1\|_2 + c_1\epsilon \right) \epsilon\nonumber\\
        \label{eqn:AhatDiff}
        &\quad + \left( c_2 \epsilon + c_2 \|B_1\|_2  \right) \|P_1-P_2\|_2
    \end{align}
    By \cite[Theorem 2.1]{konstantinov1993perturbation}, the DARE solutions are continuous functions of the LQR problem data, and so $\|P_1-P_2\|_2 \rightarrow 0$ as $\epsilon \rightarrow 0$. This implies the bound on the right hand side of \eqref{eqn:AhatDiff} approaches 0 as $\epsilon \rightarrow 0$. 
    Hence there exists an $\epsilon_2$ such that $\|\Ahat_1-\Ahat_2\|_2\leq 0.5/\chat_{A1}$ for all $\epsilon<\epsilon_2$. 
    By Lemma \ref{lem:resolvent_identities}, we have that $\chat_{A2}\leq 2\chat_{A1}$ for all $\epsilon < \epsilon_2$. 
\end{proof}
\vspace{0.1in}

We can now focus on the difference $P_2-P_1$. This matrix is real and symmetric, and hence its eigenvalues are real (although not necessarily positive). The next lemma bounds the maximum eigenvalue $\lambda_{\max}(P_2-P_1)$.

\vspace{0.1in}
\begin{lemma}\label{lem:LambdaMax}
    Consider $\LQR(A_i,B_i,\Sigma_w,C_i,R)$ for $i=1$, $2$. Assume Problems 1  and 2 satisfy Assumption~\ref{asmp:lqrReg}. Let $(P_i,K_i)$ be the corresponding solutions and gains for $i=1,2$. 
    Let $\epsilon>0$ such that:
    \begin{align*}
         \| A_1 - A_2 \|_2 \le \epsilon, \,\,\,
        \| B_1 - B_2 \|_2 \le \epsilon, 
        \mbox{ and } \| C_1 - C_2 \|_2 \le \epsilon.
    \end{align*}
    If $\epsilon < \min\{ 0.5/(\chat_{A1}(1+\|K_1\|_2)), 1\}$ for the constant $\chat_{A1}$
    \begin{align}\label{eqn:cA1_def2}
        \chat_{A1} := \max_{|z| \ge 1} \| (zI-(A_1-B_1K_1))^{-1}\|_2,  
    \end{align}
    then $P_2-P_1$ satisfies the eigenvalue bound 
    \begin{equation}
        \lambda_{\max}(P_2-P_1) \leq \left[2\sqrt{\|P_1\|_2} + \epsilon  f_1 \right] \epsilon f_1
    \end{equation}
    where 
    \begin{equation}\label{eqn:f2def}
        f_1 := \left(1+\|K_1\|_2\right) \left( \chat_{A1} + 2 \chat_{A1}^2  \left(  \left\|\m{C_1 \\ -R^{1/2} K_1}\right\|_2 +1 \right) \right).
    \end{equation}
\end{lemma}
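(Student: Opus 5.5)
Our approach is to use the gain $K_1$, which is optimal for Problem 1, as a suboptimal controller for Problem 2. We then compare costs through Lemma~\ref{cor:LQR_H2_equiv} and Lemma~\ref{lem:CloseH2}. Since the DARE solution does not depend on $\Sigma_w$, we may choose the noise covariance freely. We take $\Sigma_w = vv^\top$ for an arbitrary unit vector $v$, so that $J_i^\star = v^\top P_i v$. Both problems still satisfy Assumption~\ref{asmp:lqrReg}, because only part (b) involves $\Sigma_w$. Equivalently, we take $\Bhat = v$, so $\|\Bhat\|_2 = 1$ and $r=\min(n_e,1)=1$. Then we define two closed-loop systems as in Lemma~\ref{cor:LQR_H2_equiv}, both using the gain $K_1$:
\[
\Ahat_i := A_i - B_iK_1, \qquad \Chat_i := \m{C_i \\ -R^{1/2}K_1}, \qquad i=1,2.
\]
System 1 is the optimal closed loop of Problem 1, so $\|\Ghat_1\|_2^2 = J_1(K_1) = v^\top P_1 v \le \|P_1\|_2$.

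The first step is to verify the hypotheses of Lemma~\ref{lem:CloseH2} with $\epsilonhat := \epsilon(1+\|K_1\|_2)$. As in the footnote, $\|\Ahat_1-\Ahat_2\|_2 \le \|A_1-A_2\|_2 + \|B_1-B_2\|_2\|K_1\|_2 \le \epsilonhat$. Moreover, $\|\Chat_1-\Chat_2\|_2 = \|C_1-C_2\|_2 \le \epsilon \le \epsilonhat$, because the $K_1$ rows cancel. The assumption $\epsilon < 0.5/(\chat_{A1}(1+\|K_1\|_2))$ gives $\epsilonhat < 0.5/\chat_{A1}$. Here $\chat_{A1}$ in \eqref{eqn:cA1_def2} coincides with \eqref{eqn:cA1_def} for $\Ahat_1$. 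By Lemma~\ref{lem:resolvent_identities} (as in the footnote), $\Ahat_2$ is Schur, so $K_1$ is stabilizing for Problem 2 and $\Ghat_2$ is stable.

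The main obstacle is the side condition $\epsilonhat<1$ in Lemma~\ref{lem:CloseH2}. We only know $\epsilon<1$, and $\epsilonhat$ may exceed $1$. We would therefore not invoke the lemma's final simplification. Instead, we would rerun the last step of its proof, using $\|\Chat_1-\Chat_2\|_2\le\epsilon<1$ in the bound $\|\Chat_2\|_2 \le \|\Chat_1\|_2 + 1$, and $\epsilonhat$ only in the resolvent bound \eqref{eq:Rdiff}. This gives, for all $|z|\ge 1$,
\[
\|\Delta_{\Ghat}(z)\|_2 \le \epsilon\,\chat_{A1} + 2\chat_{A1}^2\,\epsilonhat\left(\|\Chat_1\|_2+1\right) \le \epsilon f_1,
\]
since $\epsilon \le \epsilonhat = \epsilon(1+\|K_1\|_2)$. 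Thus $\|\Ghat_1-\Ghat_2\|_2 \le \epsilon f_1$. Inserting this into \eqref{eq:H2Bound2} yields
\[
\|\Ghat_2\|_2^2 \le \|\Ghat_1\|_2^2 + \left[2\|\Ghat_1\|_2 + \epsilon f_1\right]\epsilon f_1 .
\]

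Finally, we compare costs. By optimality of $P_2$ (Lemma~\ref{lem:LQR}), $v^\top P_2 v = J_2^\star \le J_2(K_1) = \|\Ghat_2\|_2^2$. Combining this with $\|\Ghat_1\|_2^2 = v^\top P_1 v$ and $\|\Ghat_1\|_2 \le \sqrt{\|P_1\|_2}$ gives
\[
v^\top(P_2-P_1)v \le \left[2\sqrt{\|P_1\|_2} + \epsilon f_1\right]\epsilon f_1 .
\]
The right-hand side does not depend on $v$. Maximizing over unit vectors $v$ then gives the bound on $\lambda_{\max}(P_2-P_1)$.
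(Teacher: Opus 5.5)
Your proof is correct and matches the paper's argument. The paper also uses $K_1$ as a suboptimal gain for Problem~2 under rank-one noise $\Sigmahat_w=v_pv_p^\top$ (with $v_p$ a top eigenvector of $P_2-P_1$ rather than an arbitrary unit vector), sets $\epsilonhat=\epsilon(1+\|K_1\|_2)$, and applies Lemma~\ref{lem:CloseH2} with $r=1$ and $\|\Bhat\|_2=1$, so that $\fhat_1\epsilonhat=\epsilon f_1$. The obstacle you flagged is not real: evaluating the resolvent at the point of the unit circle nearest an eigenvalue $\lambda$ of $A_1-B_1K_1$ gives $\chat_{A1}\ge 1/(1-|\lambda|)\ge 1$, so $\epsilonhat<0.5/\chat_{A1}$ already forces $\epsilonhat<0.5<1$ and the lemma applies verbatim, though your rerun of its last step is also valid.
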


\begin{proof}
    First, we express the maximum eigenvalue as the stochastic LQR cost for a specific choice of process noise. 
    By definition of an eigenvalue, there exists a real vector $v_p$ such that $\|v_p\|_2=1$ and 
    \begin{align*}
        v_p^\top (P_2-P_1)v_p = \lambda_{\max}(P_2-P_1).
    \end{align*}
    Construct the stochastic LQR problems with system $(A_i,B_i)$, cost matrices $(C_i^\top C_i, R)$ and process noise covariance $\Sigmahat_w = v_pv_p^\top$. Solving this for the optimal solution produces the same $P_i$ as the original problems, as the DARE does not depend on $\Sigma_w$. However, the stochastic LQR costs are $J_{Pi}(K_i)= \tr(P_i\Sigmahat_w)$ as in Lemma~\ref{lem:LQR}. Then we have
    \begin{align}
        \lambda_{\max}(P_2-P_1) &= v_p^\top (P_2-P_1)v_p \nonumber\\
        &= \tr(P_2\Sigmahat_w)- \tr(P_1\Sigmahat_w) \nonumber\\
        &= J_{P2}(K_2) - J_{P1}(K_1). \nonumber
    \end{align}
    Because $K_2$ is the optimal controller for the second LQR problem, we can upper bound the first term by the stochastic LQR cost incurred by applying the suboptimal controller $K_1$ instead. This yields the bound 
    \begin{equation}\label{eqn:maxeig1}
        \lambda_{\max}(P_2-P_1) \leq J_{P2}(K_1) - J_{P1}(K_1).
    \end{equation}

    We now define the two closed-loop systems $\Ghat_i$ for controller $K_1$ as in \eqref{eqn:ltisys_cl} with $\Ahat_{Pi}:=A_i-B_i K_1$, $\Bhat:=v_p$,
    $\Chat_{Pi} := \m{ C_i^\top, \,\, (-R^{1/2} K_1)^\top}^\top$, and $\Dhat:=0$.  
    Define the constant $\chat_{A1}$ as in \eqref{eqn:cA1_def2}. This is the same constant as defined in \eqref{eqn:cA1_def} for the closed-loop system $\Ahat_{P1}$.
    
    Next we bound the difference $\|\Ahat_{P1}-\Ahat_{P2}\|_2$ using the triangle inequality and submultiplicativity as 
    \begin{align*}
        \|\Ahat_{P1}-\Ahat_{P2}\|_2 &\leq \|A_1-A_2\|_2 + \|B_1-B_2\|_2 \|K_1\|_2\\
        &\leq \epsilon (1+\|K_1\|_2) =: \epsilonhat.
    \end{align*}
    Note that $ \epsilonhat < 0.5/ \chat_{A1}$ by the assumption in the lemma statement. Hence by Lemma~\ref{lem:resolvent_identities}, we have that $\Ahat_{P2}$ is stable and so $K_1$ is a stabilizing controller for Problem 2. 
    Applying Lemma~\ref{cor:LQR_H2_equiv} gives that $J_{Pi}(K_1) := \|\Ghat_{Pi}\|_2^2$ for $i=1,2$. 
    We can substitute into \eqref{eqn:maxeig1} to get 
    \begin{align}
        \lambda_{\max}(P_2-P_1)
        &\leq \left| \|\Ghat_{P2}\|_2^2-\|\Ghat_{P1}\|_2^2 \right|.
    \end{align}    
    Applying Lemma~\ref{lem:CloseH2}, again noting $\epsilonhat<0.5/\chat_{A1}$, yields
    \begin{align}\label{eq:eigBound1} 
    \left|\|\Ghat_{P2}\|_2^2 - \|\Ghat_{P1}\|_2^2 \right| 
        \le    \left[ 2\|\Ghat_1\|_2 + \fhat_1  \epsilonhat \right] \, \fhat_1  \epsilonhat.
    \end{align} 
    for $\fhat_1$ as defined in \eqref{eqn:fhat1def}. We can simplify the expression for $\fhat_1$. Specifically, we can bound $\|\Chat_{P1}-\Chat_{P2}\|_2$ as
    \begin{align*}
        \|\Chat_{P1}-\Chat_{P2}\|_2 &\leq \|C_1-C_2\|_2 \leq \epsilon \leq \epsilonhat.
    \end{align*}
    Moreover, we have that $\|\Bhat_p\|_2 = \|v_p\|_2=1$, and $n_w=1$ so $r=\min\{n_e,n_w\}=1$. These facts yield the simplified $\fhat_1$
    \begin{align}\label{eqn:fhat1_simple}
        \fhat_1:=\chat_{A1}  + 2 \chat_{A1}^2\left( \| \Chat_{P1}\|_2 + 1 \right) .
    \end{align}
    Finally, we have that 
    \[\|\Ghat_1\|_2^2 = \tr (P_1 \Sigma_w) = v_p^\top P_1v_p \leq \|P_1\|_2.\]
    Substituting this and $\epsilonhat = \epsilon (1+\|K_1\|_2)$ into \eqref{eq:eigBound1} gives the bound in the lemma statement.  
\end{proof}
\vspace{0.1in}

This bound on $\lambda_{\max}(P_2-P_1)$ only depends on Problem 1 data. In particular, the constant $\chat_{A1}$ is the $\mathcal{H}_\infty$-norm for the closed-loop resolvent of Problem 1, and can be computed by standard numerical algorithms \cite{boyd1989bisection, bruinsma1990fast}. 

We can reverse the roles of Problems 1 and 2 to get a corresponding equivalent bound on $\lambda_{\max}(P_1-P_2)$ that depends only on Problem 2 data. We now use these facts to provide an explicit bound on $\|P_2-P_1\|_2$ if $\epsilon$ is sufficiently small. We present our main result as the following theorem.

\vspace{0.1in}
\begin{theorem}
    Consider $\LQR(A_i,B_i,\Sigma_w,C_i,R)$ for $i=1$, $2$. 
    Assume Problems 1  and 2 satisfy Assumption~\ref{asmp:lqrReg}. Let $(P_i,K_i)$ be the corresponding DARE solutions and gains for $i=1,2$. 
    Let $\epsilon>0$ such that:
    \begin{align*}
         \| A_1 - A_2 \|_2 \le \epsilon, \,\,\,
        \| B_1 - B_2 \|_2 \le \epsilon, 
        \mbox{ and } \| C_1 - C_2 \|_2 \le \epsilon.
    \end{align*}
    Define the constant $\chat_{A1}$ as in \eqref{eqn:cA1_def2} and 
    \begin{equation}\label{eqn:alphaDef}
        \alpha := \|R^{-1}\|_2 \left( \|B_1\|_2 + 1\right) \left( \|P_1\|_2 + 1\right)  \left( \|A_1\|_2+ 1\right).
    \end{equation}
    Then there exists an $\epsilon_0<1$ such that for all $\epsilon<\epsilon_0$ the DARE solutions satisfy the following bound
    \begin{equation}\label{eqn:Pdiff_3}
        \|P_2-P_1\|_2 \leq \left[2\sqrt{\|P_1\|_2+1} + \epsilon f_0 \right] \epsilon f_0 
    \end{equation}
    where 
    \begin{align}
        f_0 &:=  2 \chat_{A1} (1+\alpha)  \left[ 1 +
        8 \chat_{A1} + 4\chat_{A1} ( \|C_1\|_2  + \|R^{1/2}\|_2 \alpha) \right].\nonumber 
    \end{align}    
\end{theorem}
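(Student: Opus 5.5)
Since $P_2-P_1$ is symmetric, $\|P_2-P_1\|_2=\max\{\lambda_{\max}(P_2-P_1),\lambda_{\max}(P_1-P_2)\}$, so the plan is to bound both extreme eigenvalues by the right side of \eqref{eqn:Pdiff_3}. Lemma~\ref{lem:LambdaMax} already bounds $\lambda_{\max}(P_2-P_1)$ with constant $f_1$, provided $\epsilon<\min\{0.5/(\chat_{A1}(1+\|K_1\|_2)),1\}$. Reversing the roles of the two problems in Lemma~\ref{lem:LambdaMax} bounds $\lambda_{\max}(P_1-P_2)$, but with the Problem~2 quantities $\chat_{A2}$, $\|K_2\|_2$, $\|P_2\|_2$ and $\|[C_2;-R^{1/2}K_2]\|_2$. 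Most of the work is converting these into Problem~1 data. We then show that a single constant $f_0$ dominates both $f_1$ and the reversed constant $f_2$, and that $\|P_2\|_2\le\|P_1\|_2+1$.

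\textbf{Step 1: choose $\epsilon_0$ so all Problem~2 quantities are controlled.}
\begin{itemize}
\item Take $\epsilon_0\le\epsilon_2$ from Lemma~\ref{lem:chatA2bound}, so that $\chat_{A2}\le 2\chat_{A1}$.
\item By the continuity of DARE solutions \cite[Theorem 2.1]{konstantinov1993perturbation}, shrink $\epsilon_0$ so that $\|P_2-P_1\|_2\le 1$. Then $\|P_2\|_2\le\|P_1\|_2+1$, which accounts for the $\sqrt{\|P_1\|_2+1}$ term.
\item Bound $\|K_i\|_2$ by $\alpha$. From \eqref{eqn:Kf} and $R+B^\top PB\succeq R$ we get $\|(R+B^\top PB)^{-1}\|_2\le\|R^{-1}\|_2$. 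Hence
\[
\|K_1\|_2\le\|R^{-1}\|_2\|B_1\|_2\|P_1\|_2\|A_1\|_2\le\alpha,
\]
and, using $\|B_2\|_2\le\|B_1\|_2+1$, $\|P_2\|_2\le\|P_1\|_2+1$ and $\|A_2\|_2\le\|A_1\|_2+1$ (valid since $\epsilon<1$), also $\|K_2\|_2\le\alpha$. This is the reason for the ``$+1$'' factors in \eqref{eqn:alphaDef}.
\item Bound the output matrices: $\|C_2\|_2\le\|C_1\|_2+1$, and
\[
\left\|\m{C_i\\-R^{1/2}K_i}\right\|_2\le\|C_i\|_2+\|R^{1/2}\|_2\alpha .
\]
\end{itemize}

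\textbf{Step 2: check the hypotheses of Lemma~\ref{lem:LambdaMax} in both directions.} We need
\[
\epsilon<0.5/(\chat_{Ai}(1+\|K_i\|_2)),\quad i=1,2.
\]
For $i=2$ it suffices, by Step~1, to have $\epsilon<0.25/(\chat_{A1}(1+\alpha))$. We impose this, together with $\epsilon<1$, in $\epsilon_0$.

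\textbf{Step 3: dominate $f_1$ and $f_2$ by $f_0$.} Substitute $\chat_{Ai}\le 2\chat_{A1}$, $1+\|K_i\|_2\le 1+\alpha$ and the output-matrix bounds into \eqref{eqn:f2def} and its Problem~2 analogue:
\[
f_i\le(1+\alpha)\left(2\chat_{A1}+8\chat_{A1}^2(\|C_1\|_2+1+\|R^{1/2}\|_2\alpha+1)\right).
\]
Factoring out $2\chat_{A1}$ gives
\[
f_i\le 2\chat_{A1}(1+\alpha)\left[1+4\chat_{A1}(\|C_1\|_2+\|R^{1/2}\|_2\alpha)+8\chat_{A1}\right]=f_0 .
\]
The right sides of both eigenvalue bounds are monotone in $f_i$ and in $\|P_i\|_2$. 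Combining the two directional bounds therefore yields \eqref{eqn:Pdiff_3}.

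\textbf{Main obstacle.} The delicate point is that $\epsilon_0$ is obtained only qualitatively. The continuity result is needed both for $\|P_2\|_2\le\|P_1\|_2+1$ and, inside Lemma~\ref{lem:chatA2bound}, for $\chat_{A2}\le2\chat_{A1}$, so $\epsilon_0$ exists but is not explicit. The argument is also mildly circular: controlling $K_2$ needs $P_2$, which needs smallness. We avoid this by invoking continuity first to get $\|P_2-P_1\|_2\le 1$, and only then bounding $\|K_2\|_2$ crudely by $\alpha$. This is cruder than using Lemma~\ref{lem:Kbnd}, but it matches the form of $\alpha$ in the theorem.
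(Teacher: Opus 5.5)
Your proposal is correct and follows essentially the same route as the paper. Both use the two directional applications of Lemma~\ref{lem:LambdaMax}, the bound $\chat_{A2}\le 2\chat_{A1}$ from Lemma~\ref{lem:chatA2bound}, the crude bounds $\|K_i\|_2\le\alpha$, and domination of $f_1,f_2$ by $f_0$. Your factor $8\chat_{A1}^2$, coming from $2\chat_{A2}^2\le 8\chat_{A1}^2$, is the correct one.

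The only minor difference is how you obtain $\|P_2\|_2\le\|P_1\|_2+1$. You take it directly from continuity, whereas the paper gets it from Weyl's inequality plus the explicit bound on $\lambda_{\max}(P_2-P_1)$, which gives a checkable smallness condition. Either way this requirement must be folded into $\epsilon_0$, and you do so explicitly while the paper leaves it to a footnote.
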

\vspace{0.1in}
\begin{proof}
    Recall that $P_2-P_1$ is a real symmetric matrix. Its singular values are equal to the absolute values of its eigenvalues. 
    Hence, the max singular value is bounded by 
    \begin{equation}\label{eqn:Pdiff_max}
        \|P_2-P_1\|_2 \leq \max\left[ \lambda_{\max}(P_2-P_1), \lambda_{\max}(P_1-P_2)  \right].
    \end{equation}
    We can bound these expressions by applying Lemma \ref{lem:LambdaMax} for sufficiently small $\epsilon$. 
    We first quantify the bound on $\epsilon$. Define 
    \begin{align}\label{eqn:cA2_def}
        \chat_{A2} := \max_{|z| \ge 1} \| (zI-(A_2-B_2K_2))^{-1}\|_2. 
    \end{align}
    By Lemma \ref{lem:chatA2bound}, there exists an $\epsilon_2<1$ such that $\chat_{A2}\leq 2\chat_{A1}$ for all $\epsilon< \epsilon_2$. 
    Define the following constant
    \begin{align}\label{eqn:eps0_def}
        \epsilon_0:= \min\{0.5/(2\chat_{A1}(1+\alpha)),  \epsilon_2 \}.
    \end{align}
    We will show that the DARE bound holds for all $\epsilon<\epsilon_0$.

    Define $\epsilon_1:=\min\{ 0.5/(\chat_{A1}(1+\|K_1\|_2)), 1\}$. If $\epsilon < \epsilon_1$, then by Lemma~\ref{lem:LambdaMax},  
    \begin{equation}\label{eqn:lambda1}
        \lambda_{\max}(P_2-P_1) \leq \left[2\sqrt{\|P_1\|_2} + \epsilon f_1 \right] \epsilon f_1
    \end{equation}
    where $f_1$ defined in \eqref{eqn:f2def} is 
    \begin{equation*}
        f_1 := \left(1+\|K_1\|_2\right) \left( \chat_{A1} + 2 \chat_{A1}^2  \left(  \left\|\m{C_1 \\ -R^{1/2} K_1}\right\|_2 +1 \right) \right).
    \end{equation*}

    Next, we show $\epsilon_0\leq \epsilon_1$ and hence \eqref{eqn:lambda1} also holds for all $\epsilon<\epsilon_0$. Define $S_1:= R+B_1^\top P_1 B_1$ and note that $S_1 \succeq R \succ 0$ because $P_1\succeq 0$. Hence $S_1^{-1} \prec R^{-1}$, and so $\|S_1^{-1}\|_2 \leq \|R^{-1}\|_2$. Using this fact and submultiplicativity  gives
    \begin{align}
        \|K_1\|_2 &= \left\|( R+B_1^\top P_1 B_1)^{-1} B_2^\top P_2 A_2\right\|_2 \nonumber\\
        &\leq \|R^{-1}\|_2 \|B_1\|_2 \|P_1\|_2 \|A_1\|_2 \leq \alpha \label{eqn:kbound}
    \end{align}
    Therefore, the following inequality also holds:   
    \begin{equation}
        0.5/(2\chat_{A1}(1+\alpha)) \leq  0.5/(\chat_{A1}(1+\|K_1\|_2)).
    \end{equation}
    The left side appears in the definition of $\epsilon_0$ and the right side appears in the definition of $\epsilon_1$, and so $\epsilon_0 \leq \epsilon_1$. 
    Moreover, we have $f_0 \geq f_1$ because $\|K_1\|_2\leq \alpha$, $\chat_{A1}\leq 2\chat_{A1}$, and 
    \begin{align*}
        \left\|\m{C_1 \\ -R^{1/2} K_1}\right\|_2 &\leq \|C_1\|_2 + \|R^{1/2}\|_2 \|K_1\|_2  \\
        &\leq \|C_1\|_2 + 1 + \|R^{1/2}\|_2 \alpha.
    \end{align*}
    
    At this point, we can bound $\|P_2\|_2$ in terms of $\|P_1\|_2$. Both $P_i$ are symmetric positive semidefinite, so the eigenvalues for both matrices are real, nonnegative, and equal to the singular values. This gives $\|P_i\|_2 = \lambda_{\max}(P_i)$ for $i=1,2$. We obtain
    \begin{align}
        \|P_2\|_2 &= \lambda_{\max}(P_2) = \lambda_{\max}(P_2-P_1+P_1) \nonumber \\
        &\leq \lambda_{\max}(P_2-P_1) + \lambda_{\max}(P_1) \nonumber \\
        &\leq \|P_1\|_2 + 1. \label{eqn:Pdiff}
    \end{align}
    Here the first inequality follows from eigenvalue bounds for symmetric matrices \cite[Theorem 8.1.5]{golub2013matrix}. Moreover, by \eqref{eqn:lambda1} the first term can be bounded by 1 if $\epsilon$ is sufficiently small.\footnote{The bound on $\lambda_{\max}(P_2-P_1)$ is linear in $\epsilon$ to first order and all other factors are constants that depend on Problem 1 data. This bound can be assumed less than 1 by enforcing that $\epsilon$ is sufficiently small.}
    
    We can now bound $\lambda_{\max}(P_1-P_2)$. Define $\epsilonhat_2 := 0.5/(\chat_{A2}(1+\|K_2\|_2))$. 
    Then if $\epsilon<\epsilonhat_2$, we can again apply Lemma~\ref{lem:LambdaMax} with the two systems reversed to get
    \begin{equation}\label{eqn:lambda2}
        \lambda_{\max}(P_1-P_2) \leq \left[2\sqrt{\|P_1\|_2+1} + \epsilon f_2 \right] \epsilon f_2
    \end{equation}
    where 
    \begin{equation}\nonumber
        f_2 :=  (1+\|K_2\|_2) \left( 2\chat_{A1}    + 4 \chat_{A1}^2  \left[ \left\|\m{C_2 \\ -R^{1/2} K_2}\right\|_2 +1 \right] \right). 
    \end{equation}
    Note that we have used $\|P_2\|\leq \|P_1\|_2+1$ and $\chat_{A2}\leq 2\chat_{A1}$. 
    
    Next, we show $\epsilon_0 \leq \epsilonhat_2$ and hence \eqref{eqn:lambda2} also holds for all $\epsilon<\epsilon_0$. We bound $\|K_2\|_2$ following the same steps as in \eqref{eqn:kbound}
    \begin{align}
        \|K_2\|_2 \leq \|R^{-1}\|_2 \|B_2\|_2 \|P_2\|_2 \|A_2\|_2 .
    \end{align}
    We apply the triangle inequality and \eqref{eqn:Pdiff} to get 
    \begin{align*}
        \|K_2\|_2 &\leq \|R^{-1}\|_2 (\|B_1\|_2+\epsilon)( \|P_2\|_2+1)( \|A_2\|_2 +\epsilon) \leq \alpha.
    \end{align*}
    Therefore, the following inequality also holds: 
    \begin{equation}
        0.5/(2\chat_{A1}(1+\alpha)) \leq  0.5/(\chat_{A2}(1+\|K_2\|_2)).
    \end{equation}
    Again, the left side is in the definition of $\epsilon_0$ and the right side appears in the definition of $\epsilonhat_2$ and so $\epsilon_0\leq \epsilonhat_2$.  Moreover, we have $f_0 \geq f_2$ because $\|K_2\|_2\leq \alpha$, $\chat_{A2}\leq 2\chat_{A1}$, and 
    \begin{align*}
        \left\|\m{C_2 \\ -R^{1/2} K_2}\right\|_2 &\leq        \|C_2\|_2 + \|R^{1/2}\|_2 \|K_2\|_2  \\
        &\leq \|C_1\|_2 + \epsilon + \|R^{1/2}\|_2 \alpha\\
        &\leq \|C_1\|_2 + 1 + \|R^{1/2}\|_2 \alpha.
    \end{align*}
    
    We have shown bounds on $\lambda_{\max}(P_2-P_1)$ and $\lambda_{\max}(P_1-P_2)$ that hold for all $\epsilon<\epsilon_0$, where $\epsilon_0$ is defined in \eqref{eqn:eps0_def}. Substituting these bounds on the maximum eigenvalues back into \eqref{eqn:Pdiff_max} gives the result in the theorem statement. 
\end{proof}

This bound provides an explicit computable bound on $\|P_2-P_1\|_2$ that holds for sufficiently small $\epsilon$. It is possible to provide an expression for the constant $\epsilon_0$ in terms of Problem 1 data, but it requires a homotopy argument and is omitted for space. 

We compare our resulting bound to \cite[Proposition 1]{mania2019certainty}, which is a variation of the bounds in \cite{konstantinov1993perturbation}. First, define the function
\begin{equation}\label{eqn:mania_tau}
    \tau(M,\rho) := \sup \left\{ \|M^k\|\rho^{-l} \, : \, k\geq 0 \right\}
\end{equation}
that quantifies the growth of powers of a square matrix $M$. We now include their Riccati perturbation bound translated to our notation. 
\begin{proposition}[Prop. 1 from \cite{mania2019certainty}]
     Define the closed-loop matrix $\Ahat_{1}:= A_1-B_1K_1$. Let $\gamma \geq \rho(\Ahat_1)$, and also let $\epsilon$ such that $\|A_2-A_1\|$, $\|B_2-B_1\|$, and $\|Q_2-Q_1\|$ are at most $\epsilon$. Let $\|\cdot \|_+ = \|\cdot\|+1$. We assume that $R\succ 0$, $(A_1,B_1)$ is stabilizable, $(Q_1^{1/2},A_1)$ observable, and $\sigma_{\min}(P_1)\geq 1$. Then 
    \begin{equation}\label{eqn:Mania_Pdiff_bound}
        \|P_2-P_1\|\leq \mathcal{O}(1) \epsilon \frac{\tau(\Ahat_1, \gamma)^2}{1-\gamma^2} \|A_1\|_+^2 \|P_1\|_+^2 \|B_1\|_+^2 \|R^{-1}\|_+, \nonumber
    \end{equation}
    for a universal constant $\mathcal{O}(1)$, as long as    
    \begin{equation}\label{eqn:Mania_eps_bound}\nonumber
    \begin{split}
        \epsilon \leq &\mathcal{O}(1) \frac{(1-\gamma^2)^2}{\tau(\Ahat_1, \gamma)^4} \|A_1\|_+^{-2} \|P_1\|_+^{-2} \|B_1\|_+^{-3} \|R^{-1}\|_+^{-2}  \\
        &\quad\qquad \times \min\{ \|\Ahat_1\|_+^{-2}, \|P_1\|_+^{-1} \}.
    \end{split}\end{equation}
\end{proposition}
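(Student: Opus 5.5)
We would prove this by the operator-theoretic fixed-point argument of \cite{konstantinov1993perturbation}, as in \cite{mania2019certainty}, instead of the frequency-domain route of Lemma~\ref{lem:LambdaMax}. Write $P_2 = P_1 + \Delta$ and substitute into the DARE \eqref{eqn:fulldare_lqr} for Problem 2. The DARE for Problem 1 cancels the zeroth-order terms. After regrouping, $\Delta$ satisfies
\begin{equation*}
\Delta - \Ahat_1^\top \Delta \Ahat_1 = E(\epsilon) + H(\Delta),
\end{equation*}
where $E(\epsilon)$ collects the terms linear in the data perturbations $A_2-A_1$, $B_2-B_1$, $Q_2-Q_1$ evaluated at $P_1$. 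The remainder $H(\Delta)$ collects the terms that are products of perturbations and $\Delta$, or quadratic and higher in $\Delta$. The choice $K_1 = S_1^{-1}B_1^\top P_1 A_1$ makes the first-order term in $\Delta$ exactly the Stein operator $\mathcal{L}(\Delta) := \Delta - \Ahat_1^\top\Delta\Ahat_1$. This is the standard identity that the derivative of the Riccati map at $P_1$ is the closed-loop Lyapunov operator. For the inverse in $H$ we use the expansion $(R+B_2^\top P_2 B_2)^{-1} = S_1^{-1} - S_1^{-1}(\cdot)S_1^{-1} + \dots$ together with $\|S_1^{-1}\|_2 \le \|R^{-1}\|_2$.

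The key steps are as follows.
\begin{enumerate}
\item Bound the inverse Stein operator: $\|\mathcal{L}^{-1}\| \le \sum_{k\ge 0}\|\Ahat_1^k\|^2 \le \tau(\Ahat_1,\gamma)^2/(1-\gamma^2)$. This follows from $\mathcal{L}^{-1}(X)=\sum_k (\Ahat_1^\top)^k X \Ahat_1^k$ and $\|\Ahat_1^k\|\le \tau\gamma^k$.
\item Bound $\|E(\epsilon)\| \le \mathcal{O}(1)\,\epsilon\,\|A_1\|_+^2\|P_1\|_+^2\|B_1\|_+^2\|R^{-1}\|_+$ by direct submultiplicativity.
\item Show $H$ is locally quadratic on a ball $\|\Delta\|\le \nu$: $\|H(\Delta)\| \le c(\epsilon\|\Delta\| + \|\Delta\|^2)$, with an explicit Lipschitz estimate $\|H(\Delta)-H(\Delta')\|\le c(\epsilon+2\nu)\|\Delta-\Delta'\|$ and $c$ polynomial in the $\|\cdot\|_+$ quantities.
\item Define $\Phi(\Delta) := \mathcal{L}^{-1}(E+H(\Delta))$. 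Choose $\nu = 2\|\mathcal{L}^{-1}\|\,\|E\|$ and check that the $\epsilon$-restriction in the statement makes $\Phi$ map the ball into itself and be a contraction there.
\item Conclude by Banach's fixed-point theorem that $\Phi$ has a unique fixed point $\Delta_\star$ with $\|\Delta_\star\|\le\nu$. This $\nu$ is exactly the claimed bound.
\end{enumerate}

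The main obstacle is identifying the fixed point $P_1+\Delta_\star$ with the stabilizing DARE solution $P_2$, rather than some other solution of the perturbed Riccati equation. Here the assumptions $\sigma_{\min}(P_1)\ge1$ and observability of $(Q_1^{1/2},A_1)$ enter. The first ensures $P_1+\Delta_\star \succ 0$ once $\nu<1$. The second ensures $P_1+\Delta_\star$ remains symmetric positive definite and the associated closed loop remains stable; this uses the resolvent perturbation argument of Lemma~\ref{lem:resolvent_identities} or a Lyapunov argument with $P_1+\Delta_\star$ itself as certificate. Uniqueness of the stabilizing positive semidefinite solution (Lemma~\ref{lem:LQR}) then forces $P_1+\Delta_\star = P_2$.

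The extra factor $\min\{\|\Ahat_1\|_+^{-2},\|P_1\|_+^{-1}\}$ in the $\epsilon$ condition comes from this step. It guarantees that the perturbed gain keeps the closed loop within the stability margin $\gamma$, and that $\nu$ is small relative to $\sigma_{\min}(P_1)$. Tracking the constants there, rather than the fixed-point algebra, is where we expect the real work to lie.
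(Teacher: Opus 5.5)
The paper gives no proof of this proposition. It is quoted from \cite{mania2019certainty} only for comparison with the paper's Theorem, so there is no in-paper argument to match. Your outline is essentially the operator-theoretic argument of \cite{konstantinov1993perturbation} as adapted in \cite{mania2019certainty}. Its architecture is sound: a Stein equation around $\Ahat_1$, inversion of $\mathcal{L}$, a contraction on a ball of symmetric matrices, then identification of the fixed point.

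It is a genuinely different route from the paper's own technique, which never solves for $P_2$. Lemma~\ref{lem:LambdaMax} bounds $\lambda_{\max}(P_2-P_1)$ by inserting the suboptimal gain $K_1$ into Problem~2 and comparing two $\mathcal{H}_2$ norms through Lemma~\ref{lem:resolvent_identities}; the reverse inequality comes from swapping the problems. That comparison needs only detectability, so $P_1$ may be singular and no normalization $\sigma_{\min}(P_1)\ge 1$ is needed. However, the reverse direction needs Problem~2 constants, which the paper controls via the qualitative continuity result \cite[Theorem 2.1]{konstantinov1993perturbation} in Lemma~\ref{lem:chatA2bound}, so its $\epsilon_0$ is not explicit. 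Your route is two-sided in one pass, produces $P_2$ as the fixed point, and gives an explicit threshold, at the price of requiring $P_1\succ 0$.

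The time- versus frequency-domain distinction is not intrinsic to either method. With $\Rhat_1(z)=(zI-\Ahat_1)^{-1}$, Parseval gives $\mathcal{L}^{-1}(X)=\sum_{k\ge0}(\Ahat_1^\top)^kX\Ahat_1^k=\frac{1}{2\pi}\int_0^{2\pi}\Rhat_1(e^{j\theta})^\star X\,\Rhat_1(e^{j\theta})\,\mathrm{d}\theta$. Hence $\|\mathcal{L}^{-1}\|\le\chat_{A1}^2$, and your argument runs equally well with the paper's constant in place of $\tau(\Ahat_1,\gamma)^2/(1-\gamma^2)$.

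Three steps need tightening.

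(i) $E$ must be the full residual of Problem~2's DARE at $P_1$. With your split, pure $O(\epsilon^2)$ terms such as $(A_2-A_1)^\top P_1(A_2-A_1)$ lie in neither $E$ nor $H$, and the fixed-point equation is no longer the DARE.

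(ii) Plain submultiplicativity does not give the stated powers. The first-order $B$-part of $E$ is $-\Ahat_1^\top P_1(B_2-B_1)K_1-K_1^\top(B_2-B_1)^\top P_1\Ahat_1$. Bounding it via $\|\Ahat_1\|\le\|A_1\|+\|B_1\|\|K_1\|$ and $\|K_1\|\le\|R^{-1}\|\|B_1\|\|P_1\|\|A_1\|$ produces $\|R^{-1}\|^2\|B_1\|^3\|P_1\|^3\|A_1\|^2$. You need the Riccati structure: $\Ahat_1^\top P_1\Ahat_1\preceq P_1$, $K_1^\top RK_1\preceq P_1$, and $B(R+B^\top MB)^{-1}B^\top\preceq M^{-1}$ for $M\succ 0$. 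The last one, applied with $M=P_1+\Delta\succeq(1-\nu)I$, controls the nonlinear part of $H$ uniformly on the ball. That part is exactly $-\tilde L^\top\Delta B_2(R+B_2^\top(P_1+\Delta)B_2)^{-1}B_2^\top\Delta\tilde L$, where $\tilde L$ is Problem~2's closed loop under the gain built from $P_1$. So $\sigma_{\min}(P_1)\ge 1$ already does its work inside the self-map and contraction estimates. A Neumann expansion around $S_1$ using only $\|S_1^{-1}\|_2\le\|R^{-1}\|_2$ loses powers of $\|B_1\|$ and $\|R^{-1}\|$. This remainder already forces a requirement $\epsilon\lesssim(1-\gamma^2)^2\tau^{-4}\|\Ahat_1\|_+^{-2}(\cdots)$. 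That, rather than the identification step, is the natural source of that branch of the $\min$.

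(iii) In the identification step, observability of $(Q_1^{1/2},A_1)$ only serves to make $P_1\succ 0$, so that the normalization is possible. Detectability of Problem~2 is not implied by the hypotheses. Take $A_1=A_2=2$, $B_1=B_2=R=1$, $Q_1=\delta$, $Q_2=0$ and $\epsilon=\delta\to 0$; then Problem~2's DARE has the non-stabilizing PSD solution $0$. So you cannot invoke Lemma~\ref{lem:LQR} for Problem~2, and the Lyapunov certificate $P_1+\Delta_\star\succ0$ alone only gives $\rho\le 1$. Use your resolvent option instead: let $K_\star$ be the gain from $P_1+\Delta_\star$ and show quantitatively that $A_2-B_2K_\star$ stays within the stability margin of $\Ahat_1$, using $\|(zI-\Ahat_1)^{-1}\|_2\le\tau/(1-\gamma)$ for $|z|\ge1$. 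Do this via $A-BK=(I+BR^{-1}B^\top P)^{-1}A$, because Lemma~\ref{lem:Kbnd}-type gain bounds carry factors the stated condition does not absorb. Then conclude by uniqueness of the stabilizing solution.
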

\vspace{0.1in}

Both results give a bound on $\|P_2-P_1\|_2$ that is linear in $\epsilon$. We note two main differences. 
First, our bounds are not directly comparable because we employ different constants to account for the stability margin. In particular, their result uses the time-domain metric of closed-loop stability $\tau$, while we use the frequency-domain metric $\chat_{A1}$ as defined in \eqref{eqn:cA1_def2}. This gives different interpretations for the DARE perturbation bound. 
The second major difference is that they assume $(Q_1^{1/2},A_1)$ is observable. Hence $P_1 \succ 0$ and $Q_1$ and $R$ can be scaled without loss of generality to satisfy $\sigma_{\min}(P_1)\ge 1$. Our bound starts from the weaker assumption that $(C_1, A_1)$ is detectable and hence the DARE solution $P_1$ is only guaranteed to be positive semidefinite. As a result, our bound applies in more general situations. 

\section{Conclusion}
In this letter, we derive novel perturbations bounds for the discrete Riccati equation. We leverage the relationship between the time and frequency domain to bound the norm of the Riccati solutions for two systems that are close in the matrix $2$-norm. Future work will include using this approach to derive corresponding bounds for other controls objectives for which suitable frequency domain equivalencies exist.

\useRomanappendicesfalse
\makeatletter
\renewcommand{\@IEEEthmcounterin}[1]{\Alph{#1}}
\makeatother
\appendices
\section{Supplemental Lemmas}
The next lemma bounds the difference between two LQR gains in terms of the difference between Riccati solutions. 

\vspace{0.1in}
\begin{lemma}\label{lem:Kbnd}
    Define the two state feedback gains 
    \begin{equation}
        K_i := (R+B_i^\top P_iB_i)^{-1} B_i^\top P_i A_i, \quad i=1,2
    \end{equation}
    where $R\succ 0 $ and $P_i\succeq 0$. Let $\epsilon>0$ such that
    \begin{align*}
     \|A_1-A_2\|_2 \le \epsilon, \,\,\, 
     \|B_1-B_2\|_2 \le \epsilon.  
    \end{align*}
    If $\epsilon<1$ then the gain difference is bounded by 
    \begin{equation}\label{eqn:Kdiff_bound}
        \|K_1-K_2\|_2 \leq c_1 \epsilon + c_2 \|P_1-P_2\|_2
    \end{equation}
    where 
    \begin{align}\label{eqn:Kdiff_constants}
        c_1 := &\|R^{-1}\|_2  \|K_1\|_2 \|P_1\|_2 \, \left( 2 \|B_1\|_2 + 1 \right) \\
        &\quad + \|R^{-1}\|_2\|P_1\|_2 \, \left( \|A_1\|_2 + \|B_1\|_2 +1 \right) \nonumber\\ 
        c_2 := &\|R^{-1}\|_2 \|K_1\|_2 \, \left( 2 \|B_1\|_2 + 1 \right)^2 \\
        &\quad + \|R^{-1}\|_2 \, \left( \|B_1\|_2 + 1 \right) \, \left( \|A_1\|_2 + 1 \right).\nonumber
    \end{align}
\end{lemma}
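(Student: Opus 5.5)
The plan is to write $S_i := R+B_i^\top P_i B_i$, so that $K_i = S_i^{-1} B_i^\top P_i A_i$, and split the gain difference into a term from the change in the inverse factor and a term from the change in the numerator:
\begin{align*}
K_1-K_2 &= S_1^{-1}B_1^\top P_1A_1 - S_2^{-1}B_2^\top P_2A_2 \\
&= (S_1^{-1}-S_2^{-1})B_1^\top P_1A_1 + S_2^{-1}\left(B_1^\top P_1A_1 - B_2^\top P_2 A_2\right).
\end{align*}
For the first term I will use the identity $S_1^{-1}-S_2^{-1} = S_2^{-1}(S_2-S_1)S_1^{-1}$. Then $S_1^{-1}B_1^\top P_1 A_1 = K_1$, so the first term equals $S_2^{-1}(S_2-S_1)K_1$. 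This is the source of the factor $\|K_1\|_2$ in both $c_1$ and $c_2$. As in the main theorem, $P_i\succeq 0$ gives $S_i\succeq R$, hence $\|S_2^{-1}\|_2\le\|R^{-1}\|_2$. This is the only place the inverse enters, so no bound on $S_1^{-1}$ beyond its absorption into $K_1$ is needed.

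For the first term I will expand
\[S_2-S_1 = B_2^\top P_2B_2 - B_1^\top P_1 B_1 = (B_2-B_1)^\top P_1 B_1 + B_2^\top(P_2-P_1)B_1 + B_2^\top P_2 (B_2-B_1).\]
The plan is to bound it using $\|B_2\|_2\le\|B_1\|_2+\epsilon\le \|B_1\|_2+1$ (from $\epsilon<1$) and $\|P_2\|_2\le \|P_1\|_2+\|P_1-P_2\|_2$. This yields a part linear in $\epsilon$ with coefficient $\|P_1\|_2(2\|B_1\|_2+1)$, and a part proportional to $\|P_1-P_2\|_2$ with coefficient like $(\|B_1\|_2+1)\|B_1\|_2 + \epsilon(\|B_1\|_2+1)\le (2\|B_1\|_2+1)^2$. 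The resulting cross term $\epsilon\|P_1-P_2\|_2$ is absorbed into the $c_2$ coefficient using $\epsilon<1$. Multiplying by $\|R^{-1}\|_2\|K_1\|_2$ gives the first summands of $c_1$ and $c_2$.

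For the second term I will telescope
\[B_1^\top P_1A_1 - B_2^\top P_2A_2 = (B_1-B_2)^\top P_1 A_1 + B_2^\top(P_1-P_2)A_1 + B_2^\top P_2(A_1-A_2),\]
splitting $P_2 = P_1 + (P_2-P_1)$ in the last piece. The $\epsilon$-part gives $\|P_1\|_2(\|A_1\|_2 + \|B_1\|_2+1)$ after using $\|B_2\|_2\le\|B_1\|_2+1$. The $\|P_1-P_2\|_2$ part gives $(\|B_1\|_2+1)(\|A_1\|_2+\epsilon)\le(\|B_1\|_2+1)(\|A_1\|_2+1)$. Multiplying by $\|R^{-1}\|_2$ gives the second summands of $c_1$ and $c_2$.

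The main obstacle is bookkeeping. Cross terms of the form $\epsilon\|P_1-P_2\|_2$ must be folded into $c_2$ via $\epsilon<1$, and the various $\|B_2\|_2$, $\|P_2\|_2$ must be re-expressed through Problem 1 data. To make the constants match exactly, I would order the telescoping so that $P_2$ only appears multiplied by a factor that is $O(\epsilon)$. If the exact constants do not come out, I would accept slightly looser constants, since only the linear structure matters downstream.
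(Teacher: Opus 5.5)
Your proposal is correct and follows essentially the same route as the paper. Both arguments reduce to $K_1-K_2 = S_2^{-1}\bigl[(S_2-S_1)K_1 + (B_1^\top P_1A_1 - B_2^\top P_2A_2)\bigr]$ with $\|S_2^{-1}\|_2\le\|R^{-1}\|_2$, then bound the two bracketed terms by telescoping with $\epsilon<1$ and $\|P_2\|_2\le\|P_1\|_2+\|P_1-P_2\|_2$. Your three-term telescopings reproduce $c_1$ exactly and give a $\|P_1-P_2\|_2$ coefficient of $(\|B_1\|_2+1)^2$ in the first summand, which the stated $(2\|B_1\|_2+1)^2$ in $c_2$ dominates; the paper's four-term expansion differs only cosmetically.
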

\vspace{0.1in}
\begin{proof}
    Define $S_i:=R+B_i^\top P_i B_i$ for $i=1,2$. We note the following relation
    \begin{align*}
        S_2(K_1-K_2) &= (S_2-S_1)K_1 + (S_1 K_1 - S_2 K_2) \\
        &= (S_2-S_1)K_1 + \left(B_1^\top P_1 A_1 - B_2^\top P_2 A_2\right). 
    \end{align*}
    Hence we have that 
    \begin{equation*}
        K_1-K_2 = S_2^{-1} \left[ (S_2-S_1)K_1 + \left(B_1^\top P_1 A_1 - B_2^\top P_2 A_2\right)\right]
    \end{equation*}
    Next note that $S_2 \succeq R \succ 0$ because $P_2\succeq 0$.
    Hence $S_2^{-1} \prec R^{-1}$, and so $\|S_2^{-1}\|_2 \leq \|R^{-1}\|_2$. By submultiplicativity and the triangle inequality, we have 
    \begin{align} \label{eqn:DeltaKBound}
        \|K_1-K_2\|_2 \leq \|R^{-1}\|_2 &\, \left( \|S_2-S_1\|_2 \|K_1\|_2 + \right. \\
        &\left.\, \,  \left \|B_1^\top P_1 A_1 - B_2^\top P_2 A_2\right\|_2 \right) \nonumber
    \end{align}
    We next turn to bounding $\Delta_S:=S_2-S_1$. Define $\deltaB:= B_2 -B_1$ and $\Delta_P:= P_2 -P_1$. Then we have 
    \begin{align*}
        \Delta_S &= B_2^\top P_2 B_2 - B_1^\top P_1 B_1 \\
        &= (B_1+\deltaB)^\top P_2 (B_1+\deltaB) - B_1^\top P_1 B_1 \\
        &= B_1^\top \Delta_P B_1 + \deltaB^\top P_2 B_1 + B_1^\top P_2 \deltaB + \deltaB^\top P_2 \deltaB.
    \end{align*}
     Take the norm of both sides. Using submultiplicativity and the triangle inequality then yields:
    \begin{equation}
    \label{eq:DeltaSBound}
        \|\Delta_S\|_2 \leq \|B_1\|_2^2 \|\Delta_P\|_2 + 2\epsilon \|B_1\|_2 \|P_2\|_2 + \epsilon^2 \|P_2\|_2
    \end{equation}
    Next note that $P_2= P_1+\Delta_P$, so that $\|P_2\|_2 \leq \|P_1\|_2 + \|\Delta_P\|_2$. Substitute this into \eqref{eq:DeltaSBound} to obtain the bound:
    \begin{align*}
        \|\Delta_S\|_2 \leq  \epsilon \|P_1\|_2 \, \left( 2\|B_1\|_2  + \epsilon\right) 
        + \|\Delta_P\|_2  \left(\|B_1\|_2^2 + \epsilon \right)^2
    \end{align*}    
    We can similarly bound $\Delta_{BPA}:= B_1^\top P_1 A_1 - B_2^\top P_2 A_2$. Define $ \deltaB:= B_2 -B_1$ and $ \deltaA:= A_2 - A_1$.  Then we have:
    \begin{align*}
        \Delta_{BPA}    = B_1^\top \Delta_P A_1 + \deltaB^\top P_2 A_1 + B_1^\top P_2 \deltaA + \deltaB^\top P_2 \deltaA 
    \end{align*}
    By submultiplicativity and the triangle inequality, we have 
    \begin{align*}
        \left\|\Delta_{BPA}  \right\|_2 &\leq \|B_1\|_2 \|A_1\|_2 \|\Delta_P\|_2\\
        &\quad + \epsilon \|P_2\|_2 \left(  \|A_1\|_2 + \|B_1\|_2 \right)  +\epsilon^2  \|P_2\|_2  
    \end{align*}
    Again, use $\|P_2\|_2 \leq \|P_1\|_2 + \|\Delta_P\|_2$ to get 
    \begin{align*}
        & \left\|\Delta_{BPA} \right\|_2 \leq \epsilon \|P_1\|_2 \, \left(  \|A_1\|_2 + \|B_1\|_2 +\epsilon \right) \\
        & \hspace{0.4in}
        + \|\Delta_P\|_2 \cdot \left[\|B_1\|_2 \|A_1\|_2    +  \epsilon  \left(  \|A_1\|_2 + \|B_1\|_2 \right) + \epsilon^2 \right].
    \end{align*}
    We can factor this second term to the more concise expression 
    \begin{align}
    \label{eq:DeltaBPABound}
        \left\|\Delta_{BPA} \right\|_2 \leq & \epsilon \|P_1\|_2 \left(  \|A_1\|_2 + \|B_1\|_2 +\epsilon \right) \\
        & 
        \nonumber
        + \|\Delta_P\|_2  \left(\|B_1\|_2 + \epsilon\right) \left(  \|A_1\|_2 + \epsilon \right).
    \end{align}    
    Apply the bounds on $\|\Delta_S\|_2$ and $\|\Delta_{BPA}\|_2$ to the expression in \eqref{eqn:DeltaKBound}. This yields the bound on $\|K_1-K_2\|_2$ given in the lemma statement.
\end{proof}
\vspace{0.1in}
The proof is omitted due to space constraints. 

The next lemma provides resolvent bounds when two matrices $\Ahat_i$ are sufficiently close.

\vspace{0.1in}
\begin{lemma}\label{lem:resolvent_identities}
    Consider a matrix $\Ahat_1$ with all eigenvalues strictly inside the unit disk. Define the resolvent $\Rhat_1(z) := (zI-\Ahat_1)^{-1}$ along with the following constant
    \begin{align}
        \chat_{A1} := \max_{|z| \ge 1} \| (zI-\Ahat_1)^{-1}\|_2.  
    \end{align}        
    Let $\Ahat_2$ be a second matrix such that for some $\epsilon>0$:
    \begin{align}
     \|\Ahat_1-\Ahat_2\|_2 \le \epsilon. 
    \end{align}
    If $\epsilon < 0.5/\chat_{A1}$, then  $\Rhat_2(z) := (zI-A_2)^{-1}$ is well defined for all $|z| \ge 1$ and hence $\Ahat_2$ has all eigenvalues strictly inside the unit disk. Moreover,  $\|\Rhat_2(z) \|_2 \leq 2 \chat_{A1}$ and $\|\Rhat_2(z) - \Rhat_1(z)\|_2 \leq 2 \chat_{A1}^2 \epsilon$ for all $|z| \ge 1$.
\end{lemma}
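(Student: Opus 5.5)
The plan is a Neumann-series (small-gain) argument applied to $zI-\Ahat_2$ at each fixed $z$ with $|z|\ge 1$.

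\emph{Invertibility.} Fix $z$ with $|z|\ge 1$. Since $\Ahat_1$ is Schur, $zI-\Ahat_1$ is invertible and $\|\Rhat_1(z)\|_2\le \chat_{A1}$; the maximum is finite because the resolvent is continuous on $|z|\ge1$ and tends to $0$ as $|z|\to\infty$. Write $\Delta:=\Ahat_1-\Ahat_2$. Then
\begin{align*}
zI-\Ahat_2 = (zI-\Ahat_1) + \Delta = (zI-\Ahat_1)\left(I + \Rhat_1(z)\Delta\right).
\end{align*}
Since $\|\Rhat_1(z)\Delta\|_2 \le \chat_{A1}\epsilon < 1/2$, the factor $I+\Rhat_1(z)\Delta$ is invertible by the Neumann series, and
\begin{align*}
\|(I+\Rhat_1(z)\Delta)^{-1}\|_2 \le \frac{1}{1-\chat_{A1}\epsilon} < 2.
\end{align*}
Hence $zI-\Ahat_2$ is invertible for every $|z|\ge1$. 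It follows that no eigenvalue of $\Ahat_2$ lies in $\{|z|\ge 1\}$, so all eigenvalues of $\Ahat_2$ are strictly inside the unit disk.

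\emph{Norm bound.} From the factorization, $\Rhat_2(z) = (I+\Rhat_1(z)\Delta)^{-1}\Rhat_1(z)$. Submultiplicativity then gives $\|\Rhat_2(z)\|_2 \le 2\chat_{A1}$.

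\emph{Difference bound.} Use the second resolvent identity
\begin{align*}
\Rhat_2(z)-\Rhat_1(z) = \Rhat_2(z)\left[(zI-\Ahat_1)-(zI-\Ahat_2)\right]\Rhat_1(z) = \Rhat_2(z)(\Ahat_2-\Ahat_1)\Rhat_1(z).
\end{align*}
Combined with the previous bounds, this yields $\|\Rhat_2(z)-\Rhat_1(z)\|_2 \le 2\chat_{A1}\cdot\epsilon\cdot\chat_{A1} = 2\chat_{A1}^2\epsilon$.

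There is no real obstacle here. The only point needing care is justifying that $\chat_{A1}$ is finite, and noting that invertibility on all of $|z|\ge1$, rather than only on the unit circle, is exactly what gives the Schur property.
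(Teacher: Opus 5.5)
Your proposal is correct and follows the paper's proof essentially step for step. Both use the factorization $zI-\Ahat_2=(zI-\Ahat_1)(I+\Rhat_1(z)\Delta)$, a Neumann-series bound of $2$ on the inverse factor (the paper cites Golub--Van Loan Lemma 2.3.3 for this), and then the resolvent identity; the only differences are your sign convention for $\Delta$, the order of factors in the identity, and your added justification that $\chat_{A1}$ is finite, which the paper takes for granted.
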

\begin{proof}
    Consider any $|z| \ge 1$.
    The matrix $(zI-\Ahat_1)$ is invertible for any such $z$ and hence we can write:    \begin{align}\label{eqn:zIAeps}
    \begin{split}
        (zI-\Ahat_2) &= \left[ (zI-\Ahat_1)- (\Ahat_2 -\Ahat_1)\right] \\
        &= (zI-\Ahat_1)\cdot \left[ I-\Rhat_1(z) \cdot \Delta_{\Ahat} \right] 
    \end{split}\end{align}
    where $\Delta_{\Ahat}:= \Ahat_2-\Ahat_1$.
    Next, bound $\Rhat_1(z) \cdot \Delta_{\Ahat}$ as
    \begin{align}\label{eqn:Meps}
        \|\Rhat_1(z) \cdot \Delta_{\Ahat}\|_2 \le \epsilon \|(zI-\Ahat_1)^{-1}\|_2 
        \le \epsilon \chat_{A1} <0.5
    \end{align} 
    where the last inequality follows from the assumption $\epsilon<0.5/\chat_{A1}$. Hence $\|\Rhat_1(z) \cdot \Delta_{\Ahat}\|_2 <1$. Lemma 2.3.3 in \cite{golub2013matrix} then implies $I-\Rhat_1(z) \cdot \Delta_{\Ahat}$ is invertible. 
    % Thus, by \eqref{eqn:zIAeps} the inverse of $(zI-\Ahat_2)$ is well-defined for all  and using
    Using \eqref{eqn:zIAeps} gives
    \begin{align}\label{eqn:R2form}
        \Rhat_2(z) = \left( I-\Rhat_1(z) \cdot \Delta_{\Ahat} \right)^{-1} \Rhat_1(z).
    \end{align}
    Thus, the resolvent $\Rhat_2(z)$ is well-defined for all $|z| \ge 1$ so $\Ahat_2$ has all eigenvalues strictly inside the unit disk. 
    
    Lemma 2.3.3 in \cite{golub2013matrix} also gives the following bound:
    \begin{align*}
        \left\|(I-\Rhat_1(z) \cdot \Delta_{\Ahat})^{-1}\right\|_2 &\le \frac{1}{1-\|\Rhat_1(z) \cdot \Delta\Ahat\|_2} < 2,
    \end{align*}
    where the second inequality uses the upper bound in \eqref{eqn:Meps}. 
    Applying this bound to \eqref{eqn:R2form} gives:
    \begin{align*}
        \|\Rhat_2(z)\|_2 \leq 2\chat_{A1}. 
    \end{align*}
    Finally, the resolvents are related by the following identity:
    \begin{align}
    \label{eq:ResolventDiff}
    \Rhat_1(z)-\Rhat_2(z)
       =\Rhat_1(z) (\Ahat_1-\Ahat_2) \Rhat_2(z).   
    \end{align}
    Applying the bounds $\|\Rhat_1(z)\|_2 \le \chat_{A1}$, $\|\Rhat_2(z)\|_2 \le 2\chat_{A1}$, and $\|\Ahat_1-\Ahat_2\|_2 \le \epsilon$ to \eqref{eq:ResolventDiff} with submultiplicativity yields:
    \begin{align*}
        \|\Rhat_1(z)-\Rhat_2(z)\|_2 
        \leq 2 \chat_{A1}^2 \epsilon. 
    \end{align*}
\end{proof}

\bibliographystyle{IEEEtran}
\bibliography{root}

\end{document}

\typeout{get arXiv to do 4 passes: Label(s) may have changed. Rerun}

%% file: preamble.tex
\usepackage[english]{babel}
\usepackage[utf8]{inputenc}
\usepackage{amsfonts}
\usepackage{xcolor}
\usepackage{comment}
\usepackage{balance}

\usepackage{graphicx}
\usepackage{pifont}
\usepackage[ruled, vlined]{algorithm2e}
\usepackage{xstring}
\usepackage{cite}
\usepackage{multirow}
\usepackage{calc}
\usepackage{adjustbox}
\usepackage{booktabs}

\usepackage{makecell}
\usepackage{bbm}

\let\labelindent\relax
\usepackage{enumitem}
\allowdisplaybreaks
\input{preambleLettersDef.tex}

\makeatletter
\newcommand{\oset}[2]{{\mathpalette\o@set{{#1}{#2}}}}
\newcommand{\o@set}[2]{\o@@set{#1}#2}
\newcommand{\o@@set}[3]{%
  \vbox{\offinterlineskip
    \ialign{\hfil##\hfil\cr
      $\m@th\o@set@demote{#1}#2$\cr
      \noalign{\vskip0.2pt}
      $\m@th#1#3$\cr
    }%
  }%
}
\newcommand{\o@set@demote}[1]{%
  \ifx#1\displaystyle\scriptstyle\else
  \ifx#1\textstyle\scriptstyle\else
  \scriptscriptstyle\fi\fi
}
\makeatother
\newcommand{\vthetaleft}{\oset{\scalebox{0.5}[0.5]{$\bm{\leftarrow}$}}{\vtheta} \kern0em}
\newcommand{\vthetaright}{\oset{\scalebox{0.5}[0.5]{$\bm{\rightarrow}$}}{\vtheta} \kern0em}

\newcommand{\st}{\textup{subject to}}

\newcommand\m[1]{\begin{bmatrix}#1\end{bmatrix}}

\DeclareMathOperator{\LQR}{LQR}

\newcommand{\htwo}{$\mathcal{H}_2$\xspace}
\newcommand{\htwonorm}{$\mathcal{H}_2$-norm\xspace}
\newcommand{\htwonorms}{$\mathcal{H}_2$-norms\xspace}

\makeatletter
\let\save@mathaccent\mathaccent
\newcommand*\if@single[3]{%
  \setbox0\hbox{${\mathaccent"0362{#1}}^H$}%
  \setbox2\hbox{${\mathaccent"0362{\kern0pt#1}}^H$}%
  \ifdim\ht0=\ht2 #3\else #2\fi
  }
\newcommand*\rel@kern[1]{\kern#1\dimexpr\macc@kerna}
\newcommand*\widebar[1]{\@ifnextchar^{{\wide@bar{#1}{0}}}{\wide@bar{#1}{1}}}
\newcommand*\wide@bar[2]{\if@single{#1}{\wide@bar@{#1}{#2}{1}}{\wide@bar@{#1}{#2}{2}}}
\newcommand*\wide@bar@[3]{%
  \begingroup
  \def\mathaccent##1##2{%
    \let\mathaccent\save@mathaccent
    \if#32 \let\macc@nucleus\first@char \fi
    \setbox\z@\hbox{$\macc@style{\macc@nucleus}_{}$}%
    \setbox\tw@\hbox{$\macc@style{\macc@nucleus}{}_{}$}%
    \dimen@\wd\tw@
    \advance\dimen@-\wd\z@
    \divide\dimen@ 3
    \@tempdima\wd\tw@
    \advance\@tempdima-\scriptspace
    \divide\@tempdima 10
    \advance\dimen@-\@tempdima
    \ifdim\dimen@>\z@ \dimen@0pt\fi
    \rel@kern{0.6}\kern-\dimen@
    \if#31
      \overline{\rel@kern{-0.6}\kern\dimen@\macc@nucleus\rel@kern{0.4}\kern\dimen@}%
      \advance\dimen@0.4\dimexpr\macc@kerna
      \let\final@kern#2%
      \ifdim\dimen@<\z@ \let\final@kern1\fi
      \if\final@kern1 \kern-\dimen@\fi
    \else
      \overline{\rel@kern{-0.6}\kern\dimen@#1}%
    \fi
  }%
  \macc@depth\@ne
  \let\math@bgroup\@empty \let\math@egroup\macc@set@skewchar
  \mathsurround\z@ \frozen@everymath{\mathgroup\macc@group\relax}%
  \macc@set@skewchar\relax
  \let\mathaccentV\macc@nested@a
  \if#31
    \macc@nested@a\relax111{#1}%
  \else
    \def\gobble@till@marker##1\endmarker{}%
    \futurelet\first@char\gobble@till@marker#1\endmarker
    \ifcat\noexpand\first@char A\else
      \def\first@char{}%
    \fi
    \macc@nested@a\relax111{\first@char}%
  \fi
  \endgroup
}
\makeatother

\newcommand{\R}{\mathbb{R}}				% Real domain R
\newcommand{\C}{\mathbb{C}}		        % Complex domain C
\newcommand{\dm}[2]
{
	\IfStrEq{#2}{1}{\R^{#1}}{\R^{#1 \x #2}}
}

\newcommand{\tr}{\textup{trace}} 			% Trace
\usepackage{amssymb}
\makeatletter
\newcommand*{\T}{{\mathpalette\@transpose{}}}
\newcommand*{\@transpose}[2]{\raisebox{\depth}{$\m@th#1\intercal$}}
\makeatother

\newcommand*{\x}{\mathsf{x}\mskip1mu} 	

\newcommand{\splitatcommas}[1]{%
	\begingroup
	\begingroup\lccode`~=`, \lowercase{\endgroup
		\edef~{\mathchar\the\mathcode`, \penalty0 \noexpand\hspace{0pt plus .1em}}%
	}\mathcode`,="8000 #1%
	\endgroup
}
\newcommand\Item[1][]{% 
\ifx\relax#1\relax  \item \else \item[#1] \fi
\abovedisplayskip=0pt\abovedisplayshortskip=0pt~\vspace*{-\baselineskip}}

\newcommand*\oline[1]{%
  \vbox{%
    \hrule height 0.5pt%                  % Line above with certain width
    \kern0.25ex%                          % Distance between line and content
    \hbox{%
      \kern-0.1em%                        % Distance between content and left side of box, negative values for lines shorter than content
      \ifmmode#1\else\ensuremath{#1}\fi%  % The content, typeset in dependence of mode
      \kern-0.1em%                        % Distance between content and left side of box, negative values for lines shorter than content
    }% end of hbox
  }% end of vbox
}

\usepackage{hyperref}
\hypersetup{
		colorlinks,
		linkcolor={red},
		citecolor={blue},
		urlcolor={blue}
	}

%% file: preambleLettersDef.tex
\newcommand{\bv}[1]{\mathbf{#1}}		% Bold variable for English letters
\newcommand{\bvgrk}[1]{{\boldsymbol{#1}}}	% Bold variable for Greek letters
\newcommand{\jbar}{\bar{j \phantom{\tiny \,}} \kern -0.1em}

\newcommand{\chat}{\hat{c}}

\newcommand{\fhat}{\hat{f}}

\newcommand{\jhat}{\hat{j \phantom{\tiny \,}} \kern -0.1em}

\newcommand{\jtil}{\tilde{j \phantom{\tiny \,}} \kern -0.1em}

\newcommand{\jchk}{\check{j \phantom{\tiny \,}} \kern -0.1em}

\newcommand{\vj}{\bv{j}}

\newcommand{\vjbar}{\bar{\vj \phantom{\tiny \,}} \kern -0.1em}

\newcommand{\vjhat}{\hat{\vj \phantom{\tiny \,}} \kern -0.1em}

\newcommand{\vjtil}{\tilde{\vj \phantom{\tiny \,}} \kern -0.1em}

\newcommand{\vjchk}{\check{\vj \phantom{\tiny \,}} \kern -0.1em}

\newcommand{\Ahat}{\hat{A}}
\newcommand{\Bhat}{\hat{B}}
\newcommand{\Chat}{\hat{C}}
\newcommand{\Dhat}{\hat{D}}

\newcommand{\Ghat}{\hat{G}}

\newcommand{\Jhat}{\hat{J}}

\newcommand{\Rhat}{\hat{R}}

\newcommand{\Acal}{\mathcal{A}}

\newcommand{\Ycal}{\mathcal{Y}}

\newcommand{\deltaA}{\Delta_A}
\newcommand{\deltaB}{\Delta_B}

\usepackage{bm}

\newcommand{\vAcalhat}{\hat{\bm{\Acal} \phantom{a}} \kern-0.5em}
\newcommand{\vAcalcheck}{\check{\bm{\Acal} \phantom{a}} \kern-0.5em}
\newcommand{\vAcalbar}{\bar{\bm{\Acal} \phantom{a}} \kern-0.5em}
\newcommand{\vYcalhat}{\hat{\bm{\Ycal} \phantom{.}} \kern-0.2em}

\newcommand{\epsilonhat}{\hat{\epsilon}}

\usepackage{upgreek}
\newcommand{\vtheta}{\bvgrk{\uptheta}}

\newcommand{\Sigmahat}{\hat{\Sigma}}